\newcommand{\shortv}[1]{}
\newcommand{\draft}[1]{{\color{red}{\textsc{[#1]}}}}
\newcommand{\defin}[1]{\textbf{#1}}
\newcommand{\lthen}{\rightarrow}
\newcommand{\liff}{\leftrightarrow}
\newcommand{\defeq}{\coloneqq}
\newcommand{\dimp}{\Leftrightarrow}
\newcommand{\true}{\mathit{true}}
\newcommand{\false}{\mathit{false}}
\newcommand{\citeyear}{\cite}
\newcommand{\val}[1]{[\![ #1 ]\!]}
\newcommand{\commentout}[1]{}
\newcommand{\mult}[1]{\{\!\!\!\{ #1 \}\!\!\!\}}
\renewcommand{\phi}{\varphi}
\newcommand{\dop}{\mathit{do}}
\renewcommand{\L}{\mathcal{L}}
\newcommand{\A}{\mathcal{A}}
\newcommand{\F}{\mathcal{F}}
\newtheorem{theorem}{Theorem}
\newtheorem{lemma}{Lemma}
\newenvironment{customlem}[1]
  {\innercustomlem}
  {\endinnercustomlem}
\title{Language-based Decisions}
\author{Adam Bjorndahl
\institute{Department of Philosophy\\
Carnegie Mellon University\\
Pittsburgh, USA}
\email{abjorn@cmu.edu}
\and
Joseph Y. Halpern
\institute{Department of Computer Science\\
Cornell Univeristy\\
Ithaca, USA}
\email{halpern@cs.cornell.edu}
}
\begin{document}

\maketitle

\begin{abstract}
In Savage's classic decision-theoretic framework \cite{Savage},
actions are formally defined as functions from states to outcomes. But
where do the state space and outcome space come from? Expanding on
recent work by Blume, Easley, and Halpern \cite{BEH06}, we consider a
language-based framework in which actions are identified with
(conditional) descriptions in a simple underlying language, while
states and outcomes (along with probabilities and utilities) are
constructed as part of a representation theorem. Our work expands the
role of language from that in \cite{BEH06} by using it not only for
the \textit{conditions} that determine which actions are taken, but also the
\textit{effects}. More precisely, we take the set of actions to be
built from those of the form $\dop(\phi)$, for formulas $\phi$ in the
underlying language. This presents a problem: how do we
interpret the result of $\dop(\phi)$ when $\phi$ is underspecified
(i.e., compatible with multiple states)? We answer this using tools
familiar from the semantics of counterfactuals \cite{Stalnaker68}:
roughly speaking, $\dop(\phi)$ maps each state to the ``closest''
$\phi$-state. This notion of ``closest'' is also something we
construct as part of the representation theorem; in effect, then, we
prove that (under appropriate assumptions) the agent is acting
\textit{as if} each underspecified action is first made definite and
then evaluated (i.e., by maximizing expected utility). Of course,
actions in the real world are often not presented in a fully precise
manner, yet agents reason about and form preferences among them all
the same. Our work brings the abstract tools of decision theory into
closer contact with such real-world scenarios. 
\end{abstract}

%motivation
\section{Motivation}

In Savage's classic decision-theoretic framework \cite{Savage}
\emph{actions} are formally defined as functions from \emph{states} to
\emph{outcomes}. States are conceptualized as encoding the possible
uncertainty the decision-maker may have about the world, while
outcomes correspond intuitively to the payoff-relevant ways things
might turn out. Thus, an action $\alpha$ can be viewed as a kind of
long list: for each way the world might be (i.e., each state $s$),
$\alpha$ specifies what will happen---namely, the outcome
$\alpha(s)$---in case action $\alpha$ is actually performed in state $s$.

One might ask: where do the state space and outcome space come from?
Is it reasonable to model an agent using a mathematical apparatus they
presumably have no access to? Questions like these tap into a long
tradition of challenging the idealizations involved in models like
%joe2: I'm not sure what type of references you want here.  There is
%the whole literature on using representaions of
%uncertainty other than probability, there is also case-based decision
%theory; there are other possible references in BEH.
%adam2: I did have in mind some of the references in BEH that were
%specifically targetting the question of where the state/outcome
%spaces come from and what they means; the paragraph in BEH that spans
%pages 3 & 4 surveys exactly these approaches. Do you think some of
%those references would be appropriate here? 
%joe3: I added some refs from BEH
%Savage's \draft{refs}. One response might be that we are
Savage's (see, e.g.,
\cite{Ahn07,AE07,DLR01,Ghir01,GS04,Karni06,Kreps92,Lip99,Machina03,tverskykoehler}). One
response   might be that we are 
not trying to \textit{duplicate} the decision-making process going on
``in the agent's head'', but rather to \textit{represent} it,
mathematically---to show that under certain conditions it can be
tracked with a certain type of formalism (in this case, as a form of
expected utility maximization). 

Although this reply might assuage some worries about the use of
abstract mathematical frameworks for reasoning about decision making
in general, it remains problematic that actions---the objects over
which agents are supposed to ``reveal'' their preferences, through
concrete, binary choices---cannot themselves be described except by
reference to the background state and outcome spaces, which might not
be the states and outcomes that the agent is actually thinking of.
In such models, 
although outcomes are what agents are supposed to ultimately care
about, actions are the \textit{means} by which they bring outcomes
about. This makes an agent's preferences regarding actions arguably
the closest point of contact that these models have to the empirical,
observable reality of choosing between alternatives. Indeed, this
interpretation of actions is what underlies many of the intuitions
brought to bear to justify the various axioms of decision making that
Savage postulates and relies upon to prove his celebrated
representation theorem. 

The concern with where the states and outcomes are coming from
motivated Blume, Easley, and Halpern \citeyear{BEH06}
%adam2: added
(henceforth BEH)
to consider a
model where acts and language are taken to be primary in a sense that
we explain shortly, while the state and outcome space are constructed
as part of the representation rather than specified exogenously. In
more detail, BEH assumed that acts were programs in a simple
programming language
%adam1
%where we start with primitive programs, and form more complicated programs
formed by closing off a set of primitive programs
using {\bf if} \ldots {\bf then} \ldots {\bf else} \ldots,
so that if $a$ and $b$ are programs and $t$ is a
test (intuitively, a formula in a propositional language), then 
\textbf{if $t$ then $a$ else $b$} is a program.
%Roughly speaking, BEH take actions to be given by conditional
%statements akin to lines in a programming language: \textbf{if $t$
%  then choose $a$ else choose $b$}. Here, $t$ is what they call a
%``test'' and corresponds to a formula in some finite propositional
%language;
%adam2: do you think this is worth adding?
%joe3: it can't hurt; it emphasizes the role of language
Thus, rather than conditioning actions on events (i.e., subsets of
a state space), they are conditioned on \textit{descriptions} of
events, namely, tests.
This approach allows BEH to not only circumvent a fixed, exogenous
specification of the state space and outcome space (instead,
%adam1
%it is part of the representation theorem, which says that an agent has some
%subjective state space and outcome space, and views programs as
%mappings from states to outcomes),
they are constructed as part of a representation theorem, and programs
are identified with maps from from these states to outcomes),
but also (as they illustrate with several examples) makes it possible to
capture a variety of \textit{framing} effects, which basically derive
from a mismatch between how the modeler conceives of the world and how
the agent does, as manifested in different ways that descriptions of
events might map onto actual events.

Our work is perhaps best understood as an extension of their work
in which the role of language is even more central. Specifically, while
BEH allowed arbitrary primitive programs, we take the primitive
programs to have the form $\dop(\phi)$, where $\phi$ is a formula.  The
$\dop(\phi)$ notation follows Pearl \citeyear{Pearl.Biometrika};
intuitively, $\dop(\phi)$ means that the agent somehow makes $\phi$
true. Note that this action is somewhat underspecified; it does not
say what else becomes true as a result of $\phi$ being true; for example,
if $\psi$ is independent of $\phi$, it does not tell us whether $\psi$
or $\lnot \psi$ is true. In our representation theorem, we assume that
the agent has a way of specifying the effects of $\dop(\phi)$.
In more detail, we take states in our state space to be characterized
by formulas in the language (this is
%adam2
%also the case in
similar to
the canonical model used
in BEH's representation theorem), and take the outcome space to be the
same as the state space, so that a program maps states to states.
As part of the representation theorem, the agent must decide what
state $\dop(\phi)$ maps each state $\omega$ to. We follow standard
approaches to giving semantics to counterfactuals \cite{Stalnaker68}
by taking 
$\dop(\phi)$ to map $\omega$ to the
%adam1
%closest state (according to some metric) to $\omega$
``closest'' state to $\omega$ (according to some measure of closeness)
where $\phi$ is true.  Of course, what counts as
``closest'' depends on the agent's subjective view of the world,
%adam2: added
and is constructed from their preferences over acts.

This approach allows us to model choices in a way
that seems to us closer to how agents perceive and reason about the
options available to 
them. To illustrate, consider a policy-maker trying to decide whether
to raise the minimum wage to \$15 or to leave it as is.  In our
framework, this amounts to comparing the acts $\dop(MW = \$15)$ and
$\dop(\true)$ (where $\dop(true)$ amounts to doing nothing). Of course,
different agents may disagree about the side-effects of increasing the
minimum wage (businesses may close, there may be more automation so jobs
may be lost, and so on). This amounts to saying that different agents
will interpret $\dop(MW = \$15)$ differently as a function from states
to states, although all will agree that it will result in a state where
the minimum wage is \$15.%
\footnote{We remark that in this paper we consider only the
  single-agent case, but we find the multi-agent case, and
  specifically the effect of disagreements about what the closest
  state is, an exciting direction for future work.}
We can also express  contingent policies in our framework, for example,
raising the minimum wage if the economy is healthy.

By making both the acts and the test conditions formulas, we can 
capture framing and coarseness effects not only in the test
conditions, but also in the choices. For example, we might imagine
agents reacting differently to statements like
%adam1: I changed the example a bit, what do you think? (This seems
%close to the different types of framing that people actually try to
%exploit in arguing for or against policy changes like this...) 
%``we will set the
%federal minimum wage to \$15'' and ``we will set the federal minimum wage to
%be the same as that of Seattle'', even if the minimum wage in Seattle
%is \$15.
%adam2: bringing them even closer
%``we will ensure that every citizen is paid at least \$15 dollars for
``we will require that every citizen is paid at least \$15 dollars for
each hour they work'' versus
``we will require every business owner to pay their employees at least
\$15 for each hour they work'', even if we can see that these are
equivalent statements. Our framework would allow this.  
%\draft{We should say this only if we have good examples. I gave one above, but it would be good to have a better one. examples: policy-making? bounded rationality? defaults?}

The rest of this paper is organized as follows. We present our
approach as an extension of the work of BEH. This has the benefit of
allowing us to apply their representation theorem directly and focus our
efforts on the novel aspects of our extension.
%adam2: added
We begin in Section \ref{sec:lam} by reviewing the relevant
definitions from BEH and augmenting them with the new ones we need to
capture language-based, underspecified effects of actions. Then in
Section \ref{sec:rep} we articulate the representation theorem we are
aiming at, introduce decision-theoretic axioms that allow us to
achieve it---including axioms from BEH (Section \ref{sec:canc}) as
well as several new axioms (Section \ref{sec:selax})---and finally
prove the theorem (Section \ref{sec:thm}). Section \ref{sec:fur}
concludes with a discussion of future work.
%adam3: added
Appendix \ref{app:prf} collects proofs omitted from the main text.

%language, actions, and models
\section{Language, Actions, and Models} \label{sec:lam}

%adam3: this subsection title is unnecessary
%\subsection{Fundamentals}

Our first step is to import the relevant definitions from BEH so as to
present our extension of their work in context. In order to emphasize
the changes that we make and to streamline the presentation, we alter
some of their notation and terminology, and focus on the special case
of their system without randomization. 

Let $\Phi$ denote a finite set of \emph{primitive propositions}, and
$\L = \L(\Phi)$ the propositional language consisting of all Boolean
combinations of these primitives. Although of course it is possible
(and interesting) to consider other languages, in this work we 
focus on languages of this form as the \emph{underlying language of
action}---intuitively, the language in which both the conditions and
the results of actions are specified. 

A \defin{basic model (over $\L(\Phi)$)} is a tuple $M = (\Omega,
\val{\cdot}_{M})$ where $\Omega$ is a nonempty set of \emph{states}
and $\val{\cdot}_{M}: \Phi \to 2^{\Omega}$ is a \emph{valuation
function}. The valuation is recursively extended to all formulas in
$\L$ in the usual way. Intuitively,
$\val{\phi}_M$ is the set of states where $\phi$ is true. Using
$\val{\cdot}_M$ allows us to
interpret descriptions in the language $\L$ (what BEH call ``tests'')
as events: $\phi$ is interpreted as the subset $\val{\phi}_{M}
\subseteq \Omega$ of the state space $\Omega$.
We sometimes drop the subscript when the model is clear from context,
and write $\omega \models \phi$ for $\omega \in \val{\phi}$. We say
that $\phi$ is \emph{satisfiable in $M$} if $\val{\phi}_{M} \neq
\emptyset$ and that $\phi$ is \emph{valid in $M$} if $\val{\phi}_{M}
= \Omega$, and write $\models \phi$ to indicate that $\phi$ is valid
in all basic models.
%adam1: okay, if we're ignoring \T, I'll cut this for now
%A \emph{theory} is a set of formulas, $\T \subseteq \L$; we say that $\T$ is valid in $M$ if every formula $\phi \in \T$ is valid in $M$, and write $\M_{\T}$ to denote the class of basic models where $\T$ is valid.
Finally, we define the \emph{theory of $\omega$ (in $M$)} to be the set of all formulas true at $\omega$, denoted $Th(\omega) = \{\phi \: : \: \omega \models \phi\}$, and write $\omega \equiv \omega'$ iff $Th(\omega) = Th(\omega')$.

%joe1: I think that this is a distraction.  For what it's worth, if
%\phi \in T, then \phi will certainly be null (since the set of states
%where it's true will be empty); the converse isn't true in general.
%I don't mind ignoring T in this paper.  
%\footnote{BEH think of a theory as encoding
%  something about the agent's ``view of the world''; for instance,
%  they may not view each primitive proposition as entirely independent
%  of the others, perhaps assuming that $(p_{1} \land p_{2}) \lthen
%  p_{3}$. \draft{note} I'm not completely sure what the difference is
%  between an agent knowing $\phi$ (in the sense that it's in their
%  theory) versus treating $\lnot \phi$ as null... I'm also not sure if
%  the additional theory parameter is something we should bother with
%  in this version of the paper.} 

Up to now, everything we have defined has followed BEH
exactly---their ``primitive tests'' are our primitive propositions
$\Phi$; their ``tests'' are our formulas $\L(\Phi)$; their ``test
interpretations'' are our valuations $\val{\cdot}_{M}$. Next we
define our version of their ``primitive choices''. This is where
our development begins to diverge, since we take these to be
actions of the form $\dop(\phi)$; in other words, we
specify primitive choices using the same underlying language
$\L(\Phi)$ that corresponds to tests, rather than treating them as a
brand new set of primitives. 

%joe1*: why not allow F to consist of all formulas?  Of course, we
%then have to explain what do(\false) means, which is a bit of a
%pain.
%adam1*: in BEH, the set of primitive choices is assumed to be finite. we could assume that $F$ contains all atoms as well as all disjunctions of different atoms, which is still finite...
Formally, given a finite set of formulas $F \subseteq \L$, the set of
\defin{actions (over $F$)}, denoted by $\A_{F}$, is defined
recursively as follows: for each $\phi \in F$, $\dop(\phi)$ is an
action (called a \emph{primitive action}), and for all $\psi \in \L$
and $\alpha, \beta \in \A_{\L}$, 
\textbf{if $\psi$ then $\alpha$ else $\beta$} is an action.
Following BEH, we take $F$ to be finite (who take the set of
primitive choices to be finite). It is also convenient because it
allows us to exclude logical inconsistencies from $F$, obviating the
need to interpret actions like $\dop(\false)$. For the propositional
languages under consideration in this paper, up to logical
equivalence, there are only finitely many formulas in any case.

Naturally, we also wish to \textit{interpret} our actions in a way
that respects their connection to the underlying language.
This is the topic we turn to next.

%selection models
\subsection{Selection models}

In a given basic model $M$, we want $\dop(\phi)$ to correspond to a
function whose range is contained in $\val{\phi}_{M}$, the set of
$\phi$-states. Thus, we restrict our attention to basic
models in which each $\phi \in F$ is satisfiable---in this case we say
that $M$ is \defin{$F$-rich}. But this is not enough: as discussed,
$\dop(\phi)$ is underspecified; it does not in general determine a
unique function. In order to interpret such actions and compare them
to others, we must in some sense ``fill in'' the missing details.
We formalize this with the concept of a \defin{selection model (for
$F$)}, which is a basic model $M = (\Omega, \val{\cdot}_{M})$ together
with a \emph{selection function (for $M$)} $c: \Omega \times F \to
\Omega$ satisfying $c(\omega,\phi) \in \val{\phi}_M$.
%joe1*: I prefer to minimize footnotes.  More importantly, I've made
%the second argument of r a formula, not a set of states. This is
%important, since it allows two equivalent formulas to be treated
%differently. With my change, a resolution map is just what Stalnaker
%called a selection function, so we should probably change the name
%and reference Stalnaker \citeyear{Stalnaker68}
%adam1*: Interesting; I had thought that the effect of "equivalent"
%formulas being treated differently would be captured instead by
%having them be composed of different primitives, so they're not
%officially equivalent, but then potentially putting their equivalence
%into the theory \T. This seems closer to what happens in BEH. But I
%do like the idea of capturing it instead with a selection function
%that depends on formulas rather than sets of worlds. In the Stalnaker
%paper, he doesn't assume that logically equivalent formulas result in
%the same output? 
%adam1*: At any rate, I'll change "resolution" to "selection"
%globally, as suggested. 
  %  $M$)} $r: \Omega \times \P^{+}(\Omega) \to \Omega$\footnote{Where
%  $\P^{+}(\Omega)$ denotes the set of all nonempty subsets of
%  %  $\Omega$.} satisfying $r(\omega,T) \in T$.\footnote{\draft{note}
 %   $\Omega$} satisfying $r(\omega,T) \in T$.\footnote{\draft{note}
%  Typically, semantics for counterfactuals place more restrictions on
%  $r$, such as \textit{centering} and \textit{uniformity}. Return to
  %  this later.}

Selection functions were introduced by Stalnaker \citeyear{Stalnaker68}
as a mechanism to interpret counterfactual conditionals.  
Following this tradition,
we think of $c(\omega,\phi)$ as representing the ``closest'' state to
$\omega$ where $\phi$ is true.
There are many other properties one might insist $c$ have, aside from
$c(\omega, \phi) \in \val{\phi}$ (which is called
\defin{success}). For example, one may require that if $\omega \in
\val{\phi}$, then $c(\omega, \phi) = \omega$ (i.e., if $\phi$ is true
in $\omega$, then the closest state to $\omega$ where $\phi$ is true
is $\omega$ itself); this property is called \defin{centering}. 

In this paper we will also consider a relatively strong condition on
$c$, namely, that it is derived from a parametrized family of
\emph{well-orders}\footnote{A binary relation $\leq$ on a set is
  called a \emph{linear order} if it is complete, transitive, and
  antisymmetric (i.e., $x \leq y$ and $y \leq x$ implies $x=y$). A
  \emph{well-order} is a linear order in which every nonempty subset
  has a least element.} on the state space, one for each state: $\leq
\; \defeq \{\leq_{\omega} \: : \: \omega \in \Omega\}$. Intuitively,
$\omega_{1} \leq_{\omega} \omega_{2}$ says ``$\omega_{1}$ is at least
as close to $\omega$ as $\omega_{2}$ is''. We say that a selection
function $c$ is \defin{induced by $\leq$} if $c(\omega,\phi)$ always
outputs the $\leq_{\omega}$-minimal element of $\val{\phi}$. We call
$\leq$ \defin{centered} if, for each $\omega \in \Omega$, the
$\leq_{\omega}$-minimal element of $\Omega$ is $\omega$ (in which case
it is also easy to see that the induced selection function satisfies
centering). Finally, we say that $\leq$ is \defin{language-based} if
the relations $\leqq_{\omega}$ on the quotient $\Omega/\!\!\equiv$
given by 
$$[\omega_{1}] \leqq_{\omega} [\omega_{2}] \textrm{ iff } \omega_{1} \leq_{\omega} \omega_{2}$$
are well-defined well-orders, and moreover, whenever $\omega \equiv
\omega'$, we have $\leqq_{\omega} \, = \, \leqq_{\omega'}$. Note that
in this case $\omega \equiv \omega'$ implies $c(\omega,\phi) \equiv
c(\omega',\phi)$.\footnote{Here's why: since $c(\omega, \phi)$ is the
  $\leq_{\omega}$-minimal element of $\val{\phi}$, it must also be
  that $[c(\omega,\phi)]$ is the $\leqq_{\omega}$-minimal element of
  $\{[\omega''] \: : \: \omega'' \models \phi\}$. Similarly,
  $[c(\omega',\phi)]$ is the $\leqq_{\omega'}$-minimal element of
  $\{[\omega''] \: : \: \omega'' \models \phi\}$. Since
  $\leqq_{\omega} \, = \, \leqq_{\omega'}$, these must coincide, so we
  have $[c(\omega, \phi)] = [c(\omega', \phi)]$.}
%joe2*: added; I think we need some more intuition here.
%adam2: agreed, this is great
Intuitively,
%adam2
%$\le$ is language-based if
if $\leq$ is language-based then
what counts as the closest
state essentially depends only on the formulas that are true at a
state. We cannot have two states $\omega_1$ and $\omega_2$ that agree
on all formulas (so that $\omega_1 \equiv \omega_2$) and a third state
$\omega_3$ that does not agree with $\omega_1$ and $\omega_2$ on all
formulas such that $\omega_3$ is between $\omega_1$ and $\omega_2$ in
terms of distance from some state $\omega$ (i.e., we cannot have
$\omega_1 \le_\omega \omega_3 \le_\omega \omega_2$).

%We assume for now that resolution maps depend only on the
%propositional truths at a given state; that is, $\forall
%\omega,\omega' \in \Omega$, if $\forall p \in \Phi$, $\omega \in
%\val{p} \dimp \omega' \in \val{p}$, then $\forall T \subseteq \Omega$
%we have $r(\omega,T) = r(\omega',T)$. 

The purpose of the selection function in our models is to take an underspecified
transition from states to states and ``resolve the
ambiguity''. Specifically, given a transition that starts in state
$\omega$ and ends up in a $\phi$-state, the selection function $c$ can then
by applied to specify the exact $\phi$-state, namely
$c(\omega,\phi)$, where it actually ends up.
In this way, given a basic, $F$-rich model $M$, each action of the
form $\dop(\phi)$ can be interpreted in any selection model $(M,c)$
based on $M$ as a function $\val{\dop(\phi)}_{M,c}: \Omega \to \Omega$
defined by: 
$$\val{\dop(\phi)}_{M,c}(\omega) = c(\omega, \phi).$$
Of course, we can extend this interpretation to all actions in
$\A_{F}$ in the obvious way (and exactly as BEH do): 
$$
\val{\textbf{if $\psi$ then $\alpha$ else $\beta$}}_{M,c}(\omega) = \begin{cases}
\val{\alpha}_{M,c}(\omega) & \textrm{if $\omega \in \val{\psi}$}\\
\val{\beta}_{M,c}(\omega) & \textrm{if $\omega \notin \val{\psi}$.}
\end{cases}
$$

%joe1: I would cut this from here, and insert it only where we use it
%(if we ever do).
%adam1: agreed
%Given a class of basic models $\M$, $\alpha, \beta \in \A_{F}$ are
%called \defin{equivalent over $\M$} if, for every $F$-rich $M \in \M$
%and all resolutions $r$ for $M$, we have $\val{\alpha}_{M,r} =
%\val{\beta}_{M,r}$. If $\M_{\T}$ is the class of basic models
%validating the theory $\T$, and $\alpha$ and $\beta$ are equivalent
%over $\M_{\T}$, then we say that $\alpha$ and $\beta$ are
%$\T$-equivalent.\footnote{\draft{note} A small worry: does the
%  addition of ``$F$-rich'' create any problems for our application of
%  the BEH representation theorem? Double check this.} 
%joe1: For what it's worth, I can't imagine that F-richess should be a
%problem.  We allow arbitrary primitive programs.

%representation
\section{Representation} \label{sec:rep}

We begin as usual with a binary relation $\succeq$ on $\A_{F}$, where
$\alpha \succeq \beta$ says that $\alpha$ is ``at least as good as''
$\beta$.
%joe2*: I really don't like this interpretation, so I'm cutting it.
%It's an interpretation that works only if you think of \succeq as
%totally ordered; it's an incorrect interpretation if \succeq is a
%partial order.
%adam2: yeah, I see what you're saying, agreed. 
%in the sense that the agent does not prefer $\beta$ to
%$\alpha$ (intuitively, they may prefer $\alpha$ to $\beta$ or have no
%preference at all between the two).
Following standard conventions, we
define $\alpha \succ \beta$ as an abbreviation for
%joe1*: added the next clause, which (a) makes it more standard and
%(b) does the right thing even for partial orders.
$\alpha \succeq \beta$ and 
$\beta \not\succeq
\alpha$, and $\alpha \sim \beta$ for $\alpha \succeq \beta$ and $\beta
\succeq \alpha$, representing ``strict preference'' and
``indifference'', respectively.
%joe1: this is true, but I think it's a distraction here
%\footnote{As always, there is an
%  intuitive ambiguity here between the attitude of indifference
%  understood as a kind of equivalence (i.e., the agent thinks they are
%  equally good), versus a kind of incomparability (i.e., the agent
%  does not or cannot form a preference between them).} 
%joe2: pulled out of footnote
%%We also assume that $\succeq$ is complete.\footnote{A relation is
%%\emph{complete} just in case every two elements are comparable.}
We also assume that $\succeq$ is \emph{complete}, that is, all
elements are comparable, so that for all acts $\alpha$ and $\beta$,
either $\alpha \succeq \beta$ or $\beta \succeq\alpha$. 
%joe2: put in parens
%Although BEH consider incomplete relations, we
%adam2: what parens?
%joe3: I meant to put them in, then changed my mind!
Although BEH consider incomplete relations, we
focus here on the simpler case of complete relations in order to
streamline the presentation and highlight the novel components of our
model.

A \defin{language-based SEU (Subjective Expected Utility)
representation} for a relation $\succeq$ on $\A_{F}$ is a finite
selection model $(M,c)$ together with a probability measure $\pi$ on
$\Omega$ and a \emph{utility function} $u: \Omega \to \mathbb{R}$ such
that, for all $\alpha, \beta \in \A_{F}$, 
\begin{equation} \label{eqn:rep}
\alpha \succeq \beta \dimp \sum_{\omega \in \Omega} \pi(\omega) \cdot u(\val{\alpha}_{M,c}(\omega)) \geq \sum_{\omega \in \Omega} \pi(\omega) \cdot u(\val{\beta}_{M,c}(\omega)).
\end{equation}

We note the key differences between the representation theorem BEH
establish and what we are aiming at. First, their result produces a
separate outcome space and state space, whereas for us,
these spaces coincide. More importantly, their result treats
``primitive choices'' (namely, our actions $\dop(\phi)$, for $\phi \in
F$) as true primitives in the sense that each is assigned to an
\textit{arbitrary} function from states to outcomes. By contrast, we
want to respect the structure of an action like
$\dop(\phi)$---specifically, its connection to the formula $\phi$---by
requiring that $\dop(\phi)$ correspond to a map from $\Omega$ to
$\Omega$ such that $\omega \mapsto c(\omega, \phi)$ for a suitable
selection function $c$. One of the novel aspects of our proof consists in
showing how to determine the selection function from preferences on acts.

Since our framework can be viewed a specialization of the BEH
framework (with our actions having additional, language-based
structure as described), rather than proving our representation
theorem from scratch, we can reuse much of their construction. Thus,
we will present the same axioms (adapted to our notation) that BEH
present, and subseqently augment them with new principles that allow
us to construct the selection function.

%cancellation
\subsection{Cancellation} \label{sec:canc}

BEH's main axiom is a \textit{cancellation law}. Explaining this
requires a few preliminary definitions, beginning with
%joe2
%the notion of a \emph{multiset}, which is intuitively a set that
the notion of a \emph{multiset}, which
%adam2
%is a set that allows for
can be thought of as a set that allows for
%adam2
%multiple of instances of each of its elements. Thus, the multiset
multiple instances of each of its elements; two multisets are equal
just in case they contain the same elements \textit{with the same
  multiplicities}. For example, the multiset 
$\mult{a,a,a,b,b}$ is different from the multiset
%adam2
%$\mult{a,b,b,b,b}$. Both multisets have five elements, but the
$\mult{a,b,b,b,b}$: both multisets have five elements, but the
mulitiplicity of $a$ and $b$ differ.
%%joe2*: although it's true that we can identify a multiset with a
%permutation equivalence class, this seems like a strange way of
%thinking about multisets.  I'd prefer something informal.  (As an
%aside, you also never say that this is a multiset.
%multiset, but we don't (although we give an informal definition that
%seems close to formal.  We then define pemtation equivalence.  But
%dnt use it, nor do we relate it to muliset.  (I don't think we should
%need permulatation equivalence to define multiset.)  
%Formally, we say that two
%sequences $(a_{1}, \ldots, a_{n})$ and $(b_{1}, \ldots, b_{n})$ are
%\emph{permutation equivalent} if there is a bijection $\xi: \{1,
%\ldots, n\} \to \{1, \ldots, n\}$ such that $(a_{\xi(1)}, \ldots,
%a_{\xi(n)}) = (b_{1}, \ldots, b_{n})$, and write $\mult{a_{1}, \ldots,
%  a_{n}}$ to denote the equivalence class of $(a_{1}, \ldots, a_{n})$
%under this relation.
%adam2: I agree an informal definition is fine, and less of a
%distraction, though I added a short clause above to be more
%explicit about equality of multisets 

Given any subset $X \subseteq \Phi$, let
%adam3: moved inline to save space
%$$\phi_{X} = \bigwedge_{p \in X} p \land \bigwedge_{q \notin X} \lnot q.$$
$\phi_{X} = \bigwedge_{p \in X} p \land \bigwedge_{q \notin X} \lnot q$.
%joe1*: in BEH, we associated with every program a function from atoms
%to primitive programs. You're doing *exactly* the same thing here
%(except that you're writing \phi instead of do(\phi).  My preference
%would be to write do(\phi), and say we're doing exactly what BEH did.
Intuitively, $\phi_{X}$ is a ``complete description'' of the truth
values of all primitive propositions in the language $\L(\Phi)$,
namely the description that says for each primitive proposition $p$
that it is true iff it belongs to $X$. An \defin{atom} is any formula
of the form $\phi_{X}$. Since $\L(\Phi)$ is a
propositional language and we use classical semantics for
%joe4: I try to avoid using any; 
%propositional logic, the truth of any formula $\phi \in \L(\Phi)$
propositional logic, for all formulas $\phi \in \L(\Phi)$ and atoms $\phi_{X}$, 
%joe2
%is determined by any atom $\phi_{X}$: either $\models
%adam2: I wrote "any" rather than "an" on purpose. "an" can be read as asserting merely that there exists an atom that determines the truth value of \phi, but I'm saying that *any* atom does so.
%adam3*: are you okay with changing this back to "any"?
%joe4: how about this?
%adam4: yep, perfect
%is determined by an atom $\phi_{X}$: either $\models
the truth of $\phi$ is determined by $\phi_{X}$: either $\models
\phi_{X} \lthen \phi$, or $\models \phi_{X} \lthen \lnot \phi$. It is
therefore not surprising that every action in $\alpha \in \A_{F}$
can be identified with a function $f_{\alpha}: 2^{\Phi} \to F$,
defined recursively as follows: 
\begin{eqnarray*}
f_{\dop(\phi)}(X) & = & \phi\\
f_{\textbf{if $\psi$ then $\alpha$ else $\beta$}}(X) & = & \begin{cases}
f_{\alpha}(X) & \textrm{if $\models \phi_{X} \lthen \psi$}\\
f_{\beta}(X) & \textrm{if $\models \phi_{X} \lthen \lnot \psi$.}
\end{cases}
\end{eqnarray*}
%adam1: added
BEH define atoms in the same way and use them to define functions from
atoms to primitive choices just as we did above (replace $\dop(\phi)$
by an arbitrary primitive choice).%
%joe2: it's more than just morality
%Our defintions are
%notationally different but morally the same, since for us the sets
%$2^{\Phi}$ and $F$ stand in bijective correspondence with the set of
%atoms ($\phi_{X}$) and primitive choices ($\dop(\phi)$),
%respectively.
%adam2*: I changed \rightarrow to \mapsto below, which I think is more standard for describing maps by reference to elements of the domain (e.g., x \mapsto x^{2}). This might be better as a footnote, what do you think?
%adam3*: I'm putting this parenthetical in a footnote to save a bit of
%space, and because it seems like a good thing for a footnote. Let me
%know if you disagree!
%joe4: looks fine to me
\footnote{Techncially, we are not mapping atoms to primitive acts, but since
there is an obvious bijection $X \mapsto \phi_X$ beween sets of
primitive proposition and atoms, and an obvious bijection
$\phi \mapsto \dop(\phi)$ 
between elements of $F$ and primitive acts, we really can be thought of as
doing just that.}

Now we can state the central cancellation law that enables us to apply the BEH representation theorem:
\begin{description}
\item[(Canc)]
Let $\alpha_{1}, \ldots, \alpha_{n}, \beta_{1}, \ldots, \beta_{n} \in \A_{F}$, and suppose that for each $X \subseteq \Phi$ we have
%adam: moved inline to save space
%$$\mult{f_{\alpha_{1}}(X), \ldots, f_{\alpha_{n}}(X)} = \mult{f_{\beta_{1}}(X), \ldots, f_{\beta_{n}}(X)}.$$
$\mult{f_{\alpha_{1}}(X), \ldots, f_{\alpha_{n}}(X)} = \mult{f_{\beta_{1}}(X), \ldots, f_{\beta_{n}}(X)}$.
Then, if for all $i < n$ we have $\alpha_{i} \succeq \beta_{i}$, it follows that $\beta_{n} \succeq \alpha_{n}$.
\end{description}
%joe2
%adam2: nice addition
Intuitively, this says that if we get the same collection of outcomes
with $\alpha_1, \ldots, \alpha_n$ as with $\beta_1,\ldots, \beta_n$
(taking multiplicity into account) in each state, then we should view
the collection $\mult{\alpha_1, \ldots, \alpha_n}$ and
$\mult{\beta_1, \ldots, \beta_n}$ as equally good.  Thus, if
$\alpha_i$ is at least as good as $\beta_i$ for $i=1, \ldots, n-1$,
then, to balance things out, $\beta_n$ should be at least as good as $\alpha_n$.

As pointed out by BEH, Cancellation is a surprisingly powerful axiom.
In particular, BEH show that 
we can use {\bf (Canc)} to derive many simpler (and
more classical) principles of choice: that $\succeq$ is reflexive and
transitive, that \emph{independence} holds,\footnote{That is, for all
  $\alpha, \beta, \gamma, \gamma' \in \A_{F}$ and all $\phi \in F$, 
$$(\textbf{if $\phi$ then $\alpha$ else $\gamma$} \succeq \textbf{if
    $\phi$ then $\beta$ else $\gamma$}) \; \dimp \; (\textbf{if $\phi$
    then $\alpha$ else $\gamma'$} \succeq \textbf{if $\phi$ then
    $\beta$ else $\gamma'$}).$$} and that if $\alpha$ and $\beta$ are
\emph{equivalent} in the sense that $f_{\alpha} = f_{\beta}$, then
$\alpha \sim \beta$.
(However, it should be noted that Cancellation seems stronger than the
conjunection of these axioms.)
%adam2: do we not know how strong Cancellation is?
%joe3: not exactly.  We do in the case where there is randomization in
%the language

\subsection{Selection axioms} \label{sec:selax}

To present the new axioms that will allow us to construct an
appropriate selection function as part of the representation theorem,
it will be helpful to introduce some new notation. To begin, we write
\textbf{if $\phi$ then $\alpha$} as a shorthand for \textbf{if $\phi$
then $\alpha$ else $\dop(\true)$}. Intuitively, the action
$\dop(\true)$ corresponds to doing ``nothing'', since $\true$ is
true no matter what, so we might think of ``otherwise nothing'' as
being the default in case no explicit {\bf else...} clause is
given. Of course, for this to make sense we must have $\true \in F$;
we make this assumption henceforth. 

Next we define an abbreviation for \emph{conditional preference}, familiar from Savage's classical development \cite{Savage}: write $\alpha \succeq_{\phi} \beta$ as an abbreviation for $(\textbf{if $\phi$ then $\alpha$}) \succeq (\textbf{if $\phi$ then $\beta$})$.\footnote{As BEH show, the cancellation law implies independence, so in fact we have $\alpha \succeq_{\phi} \beta$ iff for \textit{all} $\gamma$, $\textbf{if $\phi$ then $\alpha$ else $\gamma$} \succeq \textbf{if $\phi$ then $\beta$ else $\gamma$}$.}
When $\phi = \phi_{X}$, we write $\alpha \succeq_{X} \beta$ for $\alpha \succeq_{\phi_{X}} \beta$, and we extend this notation to strict conditional preference and conditional indifference in the obvious way.

Our first axiom is related to the centering constraint for selection functions (i.e., that if $\phi$ is true at a state, then that state automatically counts the ``closest'' $\phi$-state):

\begin{description}
\item[(Cent)]
If $\models \psi \lthen \phi$, then $(\textbf{if $\psi$ then $\dop(\phi)$}) \sim \dop(\true)$.
\end{description}

To build intuition it's helpful to consider the special case where
$\psi = \phi$, in which case {\bf (Cent)} just says that doing $\phi$
precisely when $\phi$ is already the case (and otherwise doing
nothing) is the same as doing nothing. Here of course by ``the same''
what is really meant is that the agent is indifferent between those
two acts. Since we are trying to bootstrap properties of a
selection function from the agent's preferences, all our principles
will ultimately need to bottom out in statements about what the agent
does or does not have a preference between. The general statement of
{\bf(Cent)} simply expands this reasoning to cases where the
condition $\psi$ entails the result of the action, $\phi$, and so
again in this case $\dop(\phi)$ happens only in cases where $\phi$ is
already true. 

\begin{lemma} \label{lem:cent}
If $(M,c)$ is a selection model, $c$ satisfies
centering, and $\models \psi \lthen \phi$, then
$$\val{\textbf{if $\psi$ then $\dop(\phi)$}}_{M,c} = id_{\Omega} = \val{\dop(\true)}_{M,c}.$$
\end{lemma}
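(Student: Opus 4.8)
The plan is to verify the two claimed equalities of functions $\Omega \to \Omega$ pointwise, by evaluating each side at an arbitrary state $\omega \in \Omega$ and checking that the result is always $\omega$. Since all three objects in the statement are functions on $\Omega$, establishing agreement at every $\omega$ suffices.

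First I would dispatch the right-hand equality $id_{\Omega} = \val{\dop(\true)}_{M,c}$. By definition $\val{\dop(\true)}_{M,c}(\omega) = c(\omega, \true)$, and since $\true$ is valid we have $\omega \in \val{\true} = \Omega$; centering then gives $c(\omega, \true) = \omega$, which is exactly $id_{\Omega}(\omega)$.

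For the left-hand equality I would unfold the abbreviation \textbf{if $\psi$ then $\dop(\phi)$} to \textbf{if $\psi$ then $\dop(\phi)$ else $\dop(\true)$} and split on whether $\omega \in \val{\psi}$. If $\omega \in \val{\psi}$, the interpretation selects the ``then'' branch, yielding $\val{\dop(\phi)}_{M,c}(\omega) = c(\omega, \phi)$; here the key step is to invoke the hypothesis $\models \psi \lthen \phi$ to conclude $\omega \in \val{\phi}$, so that centering applies and gives $c(\omega, \phi) = \omega$. If $\omega \notin \val{\psi}$, the interpretation selects the ``else'' branch $\val{\dop(\true)}_{M,c}(\omega)$, which already equals $\omega$ by the computation above. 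In either case the value is $\omega$, so the function agrees with $id_{\Omega}$, completing both equalities.

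The argument is essentially a careful unwinding of the definition of $\val{\cdot}_{M,c}$ together with two applications of centering, so I do not anticipate a genuine obstacle. The only point that requires attention is the transport from $\psi$-states to $\phi$-states via the semantic entailment $\models \psi \lthen \phi$: this is precisely what ensures that in the ``then'' branch the state $\omega$ at which we evaluate actually lies in $\val{\phi}$, which is the hypothesis centering needs in order to force $c(\omega,\phi) = \omega$.
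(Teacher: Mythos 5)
Your proposal is correct and follows essentially the same argument as the paper: unwind the definition of the conditional action, split on whether $\omega \in \val{\psi}$, and apply centering in each branch (using $\val{\psi} \subseteq \val{\phi}$ in the ``then'' branch and $\val{\true} = \Omega$ in the ``else'' branch). The paper's proof is just a slightly more compact presentation of the identical case analysis.
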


Our second axiom is meant to capture the idea that \textit{sufficiently specific conditions} resolve any ambiguity (expressible in the underlying language) about the effect of an action:
\begin{description}
\item[(SSC)]
If $\models \phi \liff (\phi_{1} \lor \cdots \lor \phi_{n})$, then $\forall X \subseteq \Phi$, $\exists i \in \{1, \ldots, n\}$ such that for all $\psi$ satisfying $\models \phi_{i} \lthen \psi$ and $\models \psi \lthen \phi$, we have $\dop(\psi) \sim_{X} \dop(\phi_{i})$.
\end{description}

This requires some unpacking. As above, it is illuminating to begin by considering the special case where $\psi = \phi$. Then $\models \psi \lthen \phi$ holds trivially and $\models \phi_{i} \lthen \psi$ is true by assumption, so we can read {\bf (SSC)} intuitively as follows: If $\phi$ is ambiguous between a variety of (potentially) more precise statements (namely, $\phi_{1}, \ldots, \phi_{n}$), then for any sufficiently specific condition (i.e., any atom $\phi_{X}$), there is at least one precisification $\phi_{i}$ of $\phi$ such that, conditional on $\phi_{X}$, doing $\phi$ is equivalent to doing $\phi_{i}$ (from the agent's perspective).

This, as well as the more general statement of {\bf (SSC)}, follows from the assumption that the selection function $c$ is induced by a language-based family of well-orders.
\begin{lemma} \label{lem:lang}
If $(M,c)$ is a selection model where $c$ is induced by the
well-orders $\leq \; = \{\leq_{\omega} \: : \: \omega \in \Omega\}$,
$\leq$ is language-based, $\models \phi \liff (\phi_{1}
\lor \cdots \lor \phi_{n})$, and $X \subseteq \Phi$, then $\exists i \in
\{1, \ldots, n\}$ such that for all $\psi$ satisfying $\models
\phi_{i} \lthen \psi$ and $\models \psi \lthen \phi$ and all $\omega
\in \val{\phi_{X}}$, we have
%adam3: moved inline to save space
%$$\val{\dop(\psi)}_{M,c}(\omega) = \val{\dop(\phi_{i})}_{M,c}(\omega).$$
$\val{\dop(\psi)}_{M,c}(\omega) = \val{\dop(\phi_{i})}_{M,c}(\omega)$.
\end{lemma}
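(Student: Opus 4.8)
The plan is to pass to the quotient $\Omega/\!\!\equiv$, where the language-based hypothesis supplies a single well-order governing all the $\phi_X$-states at once, locate the $\leqq$-minimal $\phi$-class there, and read off the index $i$ from it. The first preliminary observation I would establish is that $\val{\phi_X}$ lies inside a single $\equiv$-class: since the atom $\phi_{X}$ fixes the truth value of every primitive proposition and $\L(\Phi)$ is propositional, classical semantics determines the truth value of every formula, so $Th(\omega) = Th(\omega')$ for all $\omega, \omega' \in \val{\phi_X}$. Because $\leq$ is language-based, all these states therefore share one quotient order; I would write $\leqq$ for the common value of $\leqq_{\omega}$ as $\omega$ ranges over $\val{\phi_X}$.

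Next I would select the index. Since $\models \phi \liff (\phi_{1} \lor \cdots \lor \phi_{n})$ gives $\{[\omega''] : \omega'' \models \phi\} = \bigcup_{j} \{[\omega''] : \omega'' \models \phi_{j}\}$, and $\leqq$ is a well-order, this set has a $\leqq$-least element $[\omega^{*}]$. As $\omega^{*} \models \phi$, we have $\omega^{*} \models \phi_{i}$ for some $i$; fix this $i$. Because each $\phi_{i}$ is $\equiv$-invariant, the choice depends only on the class $[\omega^{*}]$, and hence is the \emph{same} $i$ for every $\omega \in \val{\phi_X}$. This uniformity is exactly what the conclusion demands, and it is precisely what the language-based hypothesis buys us.

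Then I would verify that $i$ works. Fix $\omega \in \val{\phi_X}$ and any $\psi$ with $\models \phi_{i} \lthen \psi$ and $\models \psi \lthen \phi$, so that $\val{\phi_{i}} \subseteq \val{\psi} \subseteq \val{\phi}$. Recall $\val{\dop(\psi)}_{M,c}(\omega) = c(\omega,\psi)$ is the $\leq_{\omega}$-least element of $\val{\psi}$, and similarly for $\phi_{i}$; set $m := c(\omega,\phi_{i})$. By the observation in the footnote to the definition of ``language-based'', $[m]$ is the $\leqq_{\omega}$-least element of $\{[\omega''] : \omega'' \models \phi_{i}\}$; since $[\omega^{*}]$ is $\leqq$-least in the superset $\{[\omega''] : \phi\}$ and lies in the subset $\{[\omega''] : \phi_{i}\}$ (as $\omega^{*}\models\phi_{i}$), and $\leqq_{\omega} = \leqq$ for $\omega \in \val{\phi_X}$, it follows that $[m] = [\omega^{*}]$. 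Now for any $s \models \psi$ we have $[s] \in \{[\omega''] : \phi\}$, so $[\omega^{*}] \leqq [s]$, i.e. $m \leq_{\omega} s$; and $m \in \val{\phi_{i}} \subseteq \val{\psi}$. Hence $m$ is a $\leq_{\omega}$-lower bound of $\val{\psi}$ that belongs to $\val{\psi}$, so it is the (unique) $\leq_{\omega}$-least element, giving $c(\omega,\psi) = m = c(\omega,\phi_{i})$, as required.

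The main obstacle is securing a single index $i$ that works simultaneously for every $\omega \in \val{\phi_X}$ and every admissible $\psi$; this is what forces the detour through the quotient and is the sole place the language-based hypothesis is essential, since the raw orders $\leq_{\omega}$ could a priori pick different minimizing classes for different $\phi_X$-states. A secondary technical subtlety to handle carefully is the upgrade from $\equiv$ to genuine equality: the quotient argument only yields that the selected states agree up to $\equiv$, whereas the lemma asserts equality of states. The resolution, as above, is that $\val{\phi_{i}} \subseteq \val{\psi}$ lets $m = c(\omega,\phi_{i})$ itself serve as the minimizer in $\val{\psi}$, so uniqueness of least elements in the well-order $\leq_{\omega}$ forces $c(\omega,\psi)$ to coincide with $m$ on the nose.
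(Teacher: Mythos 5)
Your proposal is correct and rests on exactly the same key facts as the paper's proof: the closest $\phi$-state lands in some $\val{\phi_{i}}$, minimality transfers through the sandwich $\val{\phi_{i}} \subseteq \val{\psi} \subseteq \val{\phi}$ (a least element of a superset that lies in a subset is the least element of that subset), and the language-based hypothesis plus the fact that an atom determines the entire theory makes the choice of $i$ uniform across $\val{\phi_{X}}$. The only difference is organizational: the paper picks $i$ directly from $c(\omega,\phi)$ at a single state and verifies uniformity at the end via $\equiv$-invariance, whereas you secure uniformity up front by working in the quotient $\Omega/\!\!\equiv$ and then verify the state-level equality afterwards---a slightly longer route to the same argument.
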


%
%
%
%adam2*: we don't actually use this lemma anywhere; I included it
%mainly because I thought it might help give some insight into (SSC)
%(though the previous lemma alright does that). I'm cutting this for
%now, since we're tight on space anyway. 
\commentout{
\begin{lemma}
{\bf (SSC)} implies that whenever $\models \phi \liff \phi'$, we also have $\dop(\phi) \sim \dop(\phi')$.
\end{lemma}

\begin{proof}
In the statement of {\bf (SSC)}, take $n=1$ and $\phi_{1} =
\phi'$. This yields $\dop(\phi) \sim_{X} \dop(\phi')$ for all $X
\subseteq \Phi$; the result now follows from
%joe2: we may want to add these details if there's time.
{\bf (Canc)}. See the full paper for details.
\end{proof}
}

The next idea is crucial to the ultimate construction of our selection function. For each atom $\phi_{W}$, we will define a total preorder\footnote{A \emph{total preorder} is a complete and transitive relation
%adam3
%note that, unlike a linear order, a total preorder need not be antisymmetric.}
(so, unlike a linear order, it need not be antisymmetric).}
$\sqsubseteq_{W}$ on the set of atoms that will in turn be extended to a linear order and used to specify the selection function. Formally, we define:
$$\phi_{X} \sqsubseteq_{W} \phi_{Y} \textrm{ iff } \dop(\phi_{X} \lor \phi_{Y}) \sim_{W} \dop(\phi_{X}).$$
Loosely speaking, $\phi_{X} \sqsubseteq_{W} \phi_{Y}$ says that in $\phi_{W}$-states, the ambiguity inherent in doing $\phi_{X} \lor \phi_{Y}$ is resolved in the agent's mind in favour of doing $\phi_{X}$; this is why the agent is indifferent (conditional on $\phi_{W}$) between doing $\phi_{X} \lor \phi_{Y}$ and just doing $\phi_{X}$. In this sense we think of $\phi_{X}$ as being at least as ``close'' to $\phi_{W}$ as $\phi_{Y}$ is.

Note that the definition above requires $F$ to contain all atoms as
well as all pairwise disjunctions of atoms. This richness in $F$ is
what allows us to use the agent's preferences on actions to define an
appropriate preorder. We make this assumption henceforth. It is an
interesting question to what extent the ensuing construction can be
carried out without this assumption; we return to this point in
Section \ref{sec:fur}. 

Now we can state our third axiom, which simply says that this notion of closeness is transitive:
\begin{description}
\item[(Trans)]
For all $W,X,Y,Z \subseteq \Phi$, if $\phi_{X} \sqsubseteq_{W} \phi_Y$ and $\phi_{Y} \sqsubseteq_{W} \phi_{Z}$, then $\phi_{X} \sqsubseteq_{W} \phi_{Z}$.
\end{description}

%\vspace{1.3mm}

\begin{lemma} \label{lem:com}
{\bf (SSC)} implies that each $\sqsubseteq_{W}$ is complete.
\end{lemma}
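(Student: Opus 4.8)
The plan is to obtain completeness from a single, well-chosen instance of \textbf{(SSC)}. Recall that completeness of $\sqsubseteq_{W}$ means that for every $X, Y, W \subseteq \Phi$, at least one of $\phi_{X} \sqsubseteq_{W} \phi_{Y}$ or $\phi_{Y} \sqsubseteq_{W} \phi_{X}$ holds; unwinding the definition, this amounts to showing that at least one of $\dop(\phi_{X} \lor \phi_{Y}) \sim_{W} \dop(\phi_{X})$ or $\dop(\phi_{Y} \lor \phi_{X}) \sim_{W} \dop(\phi_{Y})$ holds. The key idea is that $\phi_{X} \lor \phi_{Y}$ is exactly the kind of ``ambiguous'' formula that \textbf{(SSC)} is designed to resolve, and its two precisifications are simply $\phi_{X}$ and $\phi_{Y}$ themselves.

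Concretely, I would fix $X, Y, W \subseteq \Phi$ and apply \textbf{(SSC)} with $\phi \defeq \phi_{X} \lor \phi_{Y}$, $n = 2$, $\phi_{1} \defeq \phi_{X}$, and $\phi_{2} \defeq \phi_{Y}$; the required hypothesis $\models \phi \liff (\phi_{1} \lor \phi_{2})$ then holds trivially. Instantiating the universal quantifier $\forall X$ in \textbf{(SSC)} at the set $W$, we obtain an index $i \in \{1, 2\}$ such that $\dop(\psi) \sim_{W} \dop(\phi_{i})$ for every $\psi$ satisfying $\models \phi_{i} \lthen \psi$ and $\models \psi \lthen (\phi_{X} \lor \phi_{Y})$. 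The crucial observation is that the disjunction itself is an admissible choice of $\psi$: since $\models \phi_{i} \lthen (\phi_{X} \lor \phi_{Y})$ and $\models (\phi_{X} \lor \phi_{Y}) \lthen (\phi_{X} \lor \phi_{Y})$, instantiating $\psi$ as (the appropriate orientation of) $\phi_{X} \lor \phi_{Y}$ yields $\dop(\phi_{X} \lor \phi_{Y}) \sim_{W} \dop(\phi_{i})$. If $i = 1$ this is precisely $\phi_{X} \sqsubseteq_{W} \phi_{Y}$, and if $i = 2$ it is precisely $\phi_{Y} \sqsubseteq_{W} \phi_{X}$; either way one of the two required comparisons holds, which is completeness.

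The only point requiring care, and the step I expect to be the main (indeed essentially the only) obstacle, is the mismatch between the two syntactic orientations $\phi_{X} \lor \phi_{Y}$ and $\phi_{Y} \lor \phi_{X}$ appearing in the definitions of $\phi_{X} \sqsubseteq_{W} \phi_{Y}$ and $\phi_{Y} \sqsubseteq_{W} \phi_{X}$. This is mild: since $\models \psi \lthen \phi$ is the only upper constraint on $\psi$ and both orientations of the disjunction are logically equivalent to $\phi$, I can simply instantiate $\psi$ with whichever orientation the target comparison requires, so the freedom in choosing $\psi$ absorbs the reordering. If one prefers to treat the two orientations as genuinely distinct actions, the gap is closed by the fact---derivable from \textbf{(Canc)}---that logically equivalent formulas induce indifferent $\dop$-actions (equivalently, $f_{\dop(\phi_{X} \lor \phi_{Y})} = f_{\dop(\phi_{Y} \lor \phi_{X})}$), so that $\dop(\phi_{X} \lor \phi_{Y}) \sim_{W} \dop(\phi_{Y} \lor \phi_{X})$. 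Throughout, this presupposes---as already assumed---that $F$ contains all atoms and all pairwise disjunctions of atoms, so that every $\dop$-action invoked above is well-defined.
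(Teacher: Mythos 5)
Your proof is correct and follows essentially the same route as the paper's: apply \textbf{(SSC)} with $\phi = \phi_{X} \lor \phi_{Y}$, $\phi_{1} = \phi_{X}$, $\phi_{2} = \phi_{Y}$, and instantiate $\psi = \phi$ to get $\dop(\phi_{X} \lor \phi_{Y}) \sim_{W} \dop(\phi_{i})$ for some $i$, which is one of the two required comparisons. Your extra care about the syntactic orientation $\phi_{X} \lor \phi_{Y}$ versus $\phi_{Y} \lor \phi_{X}$ is a point the paper silently glosses over, and your fix (choosing $\psi$ to be the needed orientation, or invoking \textbf{(Canc)}-derived indifference between logically equivalent actions) is sound.
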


\begin{lemma} \label{lem:wll}
If {\bf (SSC)} and {\bf (Trans)} hold, then each $\sqsubseteq_{W}$ is
a total preorder and can be extended to a well-order $\leq_{W}$ on the
set of atoms; if, in addition, {\bf (Cent)} holds, then each $\leq_{W}$
can be defined so that $\phi_{W}$ is the $\leq_{W}$-minimal element. 
\end{lemma}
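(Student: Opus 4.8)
The plan is to prove Lemma~\ref{lem:wll} in three stages, matching its three assertions. First, given \textbf{(SSC)} and \textbf{(Trans)}, I would establish that each $\sqsubseteq_{W}$ is a total preorder. Completeness is already supplied by Lemma~\ref{lem:com}, and transitivity is exactly the content of \textbf{(Trans)}, so this stage amounts to assembling these two facts. (Reflexivity, $\phi_{X} \sqsubseteq_{W} \phi_{X}$, follows immediately from the definition, since $\dop(\phi_{X} \lor \phi_{X}) \sim_{W} \dop(\phi_{X})$ reduces to $\dop(\phi_{X}) \sim_{W} \dop(\phi_{X})$, using that $\models \phi_{X} \liff (\phi_{X} \lor \phi_{X})$ together with the fact that logically equivalent formulas yield indifferent actions.)

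Second, I would extend the total preorder $\sqsubseteq_{W}$ to a well-order $\leq_{W}$ on the finite set of atoms. The key observation is that the set of atoms is \emph{finite} (there are $2^{|\Phi|}$ of them), so any linear order on them is automatically a well-order; thus it suffices to extend $\sqsubseteq_{W}$ to a linear (i.e.\ total, antisymmetric) order. A total preorder partitions the atoms into equivalence classes (where $\phi_{X}$ and $\phi_{Y}$ are equivalent iff $\phi_{X} \sqsubseteq_{W} \phi_{Y}$ and $\phi_{Y} \sqsubseteq_{W} \phi_{X}$); these classes are themselves linearly ordered by $\sqsubseteq_{W}$, and I would break ties within each class by imposing any arbitrary strict linear order (for instance, some fixed enumeration of the subsets of $\Phi$). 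The resulting relation $\leq_{W}$ is complete, transitive, and antisymmetric, hence a linear order, hence a well-order on the finite set of atoms, and it refines $\sqsubseteq_{W}$ in the sense that $\phi_{X} \sqsubseteq_{W} \phi_{Y}$ implies $\phi_{X} \leq_{W} \phi_{Y}$ whenever the two are not $\sqsubseteq_{W}$-equivalent.

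Third, assuming additionally that \textbf{(Cent)} holds, I would show that the extension can be arranged so that $\phi_{W}$ is the $\leq_{W}$-minimal atom. The crux is to show that $\phi_{W}$ is $\sqsubseteq_{W}$-minimal among all atoms, i.e.\ $\phi_{W} \sqsubseteq_{W} \phi_{Y}$ for every $Y$. Unpacking the definition, this requires $\dop(\phi_{W} \lor \phi_{Y}) \sim_{W} \dop(\phi_{W})$, i.e.\ conditional on $\phi_{W}$, doing $\phi_{W} \lor \phi_{Y}$ is indifferent to doing $\phi_{W}$. Here \textbf{(Cent)} should do the work: since $\models \phi_{W} \lthen (\phi_{W} \lor \phi_{Y})$, the axiom gives $(\textbf{if $\phi_{W}$ then $\dop(\phi_{W} \lor \phi_{Y})$}) \sim \dop(\true)$, and likewise $\models \phi_{W} \lthen \phi_{W}$ gives $(\textbf{if $\phi_{W}$ then $\dop(\phi_{W})$}) \sim \dop(\true)$; chaining these two indifferences (via transitivity of $\sim$, which follows from \textbf{(Canc)}) yields exactly $\dop(\phi_{W} \lor \phi_{Y}) \sim_{W} \dop(\phi_{W})$, which is $\phi_{W} \sqsubseteq_{W} \phi_{Y}$. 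Once $\phi_{W}$ is known to be $\sqsubseteq_{W}$-minimal, I would ensure the tie-breaking step of stage two places $\phi_{W}$ first within its $\sqsubseteq_{W}$-equivalence class, making it the global $\leq_{W}$-minimum.

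The main obstacle I anticipate is the third stage: one must verify carefully that $\phi_{W}$ is not merely $\sqsubseteq_{W}$-minimal but can be made strictly least under $\leq_{W}$, which requires that the tie-breaking within $\phi_{W}$'s equivalence class be chosen to respect this, and that \textbf{(Cent)} indeed delivers the conditional indifference $\dop(\phi_{W} \lor \phi_{Y}) \sim_{W} \dop(\phi_{W})$ via the $\dop(\true)$ shorthand. The translation between the $\sim_{W}$ (conditional preference) notation and the raw \textbf{(Cent)} statement, including the implicit appeal to \textbf{(Canc)} for transitivity and symmetry of $\sim$, is the step most likely to hide a subtlety.
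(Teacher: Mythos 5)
Your proposal is correct and follows essentially the same route as the paper's own proof: completeness from Lemma~\ref{lem:com} plus \textbf{(Trans)} gives the total preorder, tie-breaking within $\sqsubseteq_{W}$-equivalence classes on the finite set of atoms gives the well-order extension, and the two applications of \textbf{(Cent)} (with $\psi = \phi_{W}$, $\phi = \phi_{W} \lor \phi_{Y}$, and then $\psi = \phi = \phi_{W}$) chained by transitivity of $\sim$ establish $\phi_{W} \sqsubseteq_{W} \phi_{Y}$ for all $Y$, so $\phi_{W}$ can be placed first. Your closing worry is handled exactly as you suggest, and your observation that transitivity of $\sim$ ultimately rests on \textbf{(Canc)} is an implicit appeal the paper makes as well.
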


Given a family of well-orders $\{\leq_{W} \: : \: W \subseteq \Phi\}$ as defined in Lemma \ref{lem:wll}, let $min_{\leq}(W, \phi)$ denote the unique $X \subseteq \Phi$ such that $\phi_{X}$ is $\leq_{W}$-minimal in $\{\phi_{Y} \: : \: \models \phi_{Y} \lthen \phi\}$. So $\phi_{X}$ is the ``closest'' atom compatible with $\phi$ to $\phi_{W}$; intuitively, then, doing $\phi$ in a $\phi_{W}$ situation should essentially amount to doing $\phi_{X}$. This is precisely what the next lemma asserts.

\begin{lemma} \label{lem:asif}
If {\bf (SSC)} and {\bf (Trans)} hold, then $\dop(\phi) \sim_{W} \dop(\phi_{min_{\leq}(W,\phi)})$.
\end{lemma}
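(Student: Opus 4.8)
The plan is to first use \textbf{(SSC)} to collapse $\dop(\phi)$ to a single atom conditional on $\phi_W$, and then use the $\leq_W$-minimality of $\phi_{min_{\leq}(W,\phi)}$ to pin down that atom up to conditional indifference. Throughout, fix $W$ and write $X \defeq min_{\leq}(W,\phi)$. By classical propositional semantics we may write $\models \phi \liff (\phi_{Y_1} \lor \cdots \lor \phi_{Y_m})$, where $Y_1, \ldots, Y_m$ enumerate exactly those $Y \subseteq \Phi$ with $\models \phi_{Y} \lthen \phi$; note that $X$ is among the $Y_j$, and that each $\phi_{Y_j}$ and each pairwise disjunction $\phi_{Y_j} \lor \phi_{Y_k}$ lies in $F$ by the standing richness assumption.

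First I would apply \textbf{(SSC)} to this decomposition with conditioning atom $\phi_W$. This yields an index $i$ such that $\dop(\psi) \sim_W \dop(\phi_{Y_i})$ for every $\psi$ with $\models \phi_{Y_i} \lthen \psi$ and $\models \psi \lthen \phi$. Taking $\psi = \phi$ (which trivially satisfies both hypotheses) gives the first key fact, $\dop(\phi) \sim_W \dop(\phi_{Y_i})$. Taking instead $\psi = \phi_{Y_i} \lor \phi_X$ (which satisfies $\models \phi_{Y_i} \lthen \psi$ and $\models \psi \lthen \phi$) gives $\dop(\phi_{Y_i} \lor \phi_X) \sim_W \dop(\phi_{Y_i})$, that is, $\phi_{Y_i} \sqsubseteq_W \phi_X$.

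It remains to show $\phi_X \sqsubseteq_W \phi_{Y_i}$, after which the two comparisons combine into $\dop(\phi_{Y_i}) \sim_W \dop(\phi_X)$. Here I would invoke that the well-order $\leq_W$ of Lemma \ref{lem:wll} is a faithful extension of the total preorder $\sqsubseteq_W$, so that $\phi_A \leq_W \phi_B$ implies $\phi_A \sqsubseteq_W \phi_B$: otherwise completeness of $\sqsubseteq_W$ (Lemma \ref{lem:com}) would give $\phi_B \sqsubseteq_W \phi_A$ with $\phi_A \not\sqsubseteq_W \phi_B$, forcing $\phi_B <_W \phi_A$ and contradicting $\phi_A \leq_W \phi_B$. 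Since $\phi_X$ is $\leq_W$-minimal in $\{\phi_Y \: : \: \models \phi_Y \lthen \phi\}$ and $\phi_{Y_i}$ belongs to this set, $\phi_X \leq_W \phi_{Y_i}$, hence $\phi_X \sqsubseteq_W \phi_{Y_i}$, i.e. $\dop(\phi_X \lor \phi_{Y_i}) \sim_W \dop(\phi_X)$. Because $\phi_{Y_i} \lor \phi_X$ and $\phi_X \lor \phi_{Y_i}$ are logically equivalent (and actions are identified up to logical equivalence), transitivity of $\sim_W$---inherited from transitivity of $\succeq$, itself a consequence of \textbf{(Canc)}---yields $\dop(\phi_{Y_i}) \sim_W \dop(\phi_X)$. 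Chaining with the first key fact gives $\dop(\phi) \sim_W \dop(\phi_{Y_i}) \sim_W \dop(\phi_{X}) = \dop(\phi_{min_{\leq}(W,\phi)})$, as required.

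I expect the main obstacle to be the bridge between the two orderings: both \textbf{(SSC)} and the definition of $\sqsubseteq_W$ operate at the level of the preorder, whereas $min_{\leq}(W,\phi)$ is defined through its linearization $\leq_W$. The crux is therefore the observation that a $\leq_W$-minimal atom is automatically $\sqsubseteq_W$-least, which depends on $\leq_W$ faithfully extending $\sqsubseteq_W$ (Lemma \ref{lem:wll}) together with completeness (Lemma \ref{lem:com}); the only other delicate point is the harmless identification of $\phi_{Y_i} \lor \phi_X$ with $\phi_X \lor \phi_{Y_i}$.
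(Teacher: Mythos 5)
Your proof is correct and follows essentially the same route as the paper's: a single application of \textbf{(SSC)} to the atomic decomposition of $\phi$ conditional on $\phi_W$, instantiated once at $\psi = \phi$ and once at the disjunction of $\phi_X$ with the atom $\phi_{Y_i}$ that \textbf{(SSC)} provides, then combined with $\leq_W$-minimality of $\phi_X$ (via the fact that $\leq_W$ faithfully extends $\sqsubseteq_W$) and transitivity of $\sim_W$. The only differences are cosmetic: the paper writes the disjunction in the same order in both instantiations, so your extra step identifying $\dop(\phi_{Y_i} \lor \phi_X)$ with $\dop(\phi_X \lor \phi_{Y_i})$ is avoidable (and is the one place where your justification is slightly loose, since the paper never formally identifies actions up to logical equivalence); on the other hand, your explicit argument that $\leq_W$-minimality implies $\sqsubseteq_W$-minimality spells out a step the paper leaves implicit in the phrase ``which means.''
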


%representation theorem
\subsection{The representation theorem} \label{sec:thm}

\begin{theorem}
If $\succeq$ is a complete binary relation on $\A_{F}$ satisfying {\bf (Canc)}, {\bf (Cent)}, {\bf (SSC)}, and {\bf (Trans)}, then there is a language-based SEU representation for $\succeq$.
\end{theorem}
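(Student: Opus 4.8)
The plan is to build the language-based SEU representation in two movements: first extract a selection model $(M,c)$ from $\succeq$ using the structural lemmas already in hand, and then obtain the probability $\pi$ and utility $u$ by reducing to the BEH representation theorem. For the basic model $M$ I would take the canonical model whose state space $\Omega$ is the set of atoms $\{\phi_X \: : \: X \subseteq \Phi\}$, with the natural valuation $\val{p}_M = \{\phi_X \: : \: p \in X\}$. In this model $Th(\phi_X)$ recovers $X$, so $\equiv$ is just equality; hence every family of well-orders on $\Omega$ is trivially language-based, and only centering must be arranged. For the selection function, \textbf{(SSC)} and \textbf{(Trans)} let me invoke Lemma \ref{lem:wll} to obtain, for each $W \subseteq \Phi$, a well-order $\leq_W$ on atoms extending the preorder $\sqsubseteq_W$, with $\phi_W$ as its $\leq_W$-minimal element (using \textbf{(Cent)}). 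Setting $c(\phi_W, \phi) \defeq \phi_{min_{\leq}(W,\phi)}$ then defines a selection function induced by $\leq = \{\leq_W\}$ that satisfies success and centering. This is the selection model the representation will use; the decision-theoretic content that it correctly captures the agent's handling of underspecified actions is exactly Lemma \ref{lem:asif}: $\dop(\phi) \sim_W \dop(\phi_{min_{\leq}(W,\phi)})$ for every $\phi \in F$ and every $W$.

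With $(M,c)$ fixed, $\val{\cdot}_{M,c}$ is determined, and in particular $\val{\dop(\phi)}_{M,c}(\phi_W) = \phi_{min_{\leq}(W,\phi)}$, an atom; note also that each atomic action $\dop(\phi_X)$ is interpreted as the constant map onto the state $\phi_X$. The bridge to BEH is the observation that every action is indifferent to one built purely from atomic do-operators. Given $\alpha \in \A_F$, define its resolution $\hat\alpha$ to be (a canonical if-then-else encoding of) the action that in each atom $\phi_X$ performs $\dop(\phi_{min_{\leq}(X, f_\alpha(X))})$; this lies in $\A_F$ because $F$ contains all atoms. By construction $\val{\hat\alpha}_{M,c} = \val{\alpha}_{M,c}$, and I would prove $\alpha \sim \hat\alpha$ by replacing, one atom at a time, what $\alpha$ does at $\phi_X$ with what $\hat\alpha$ does, each replacement preserving indifference by Lemma \ref{lem:asif} (which gives $\dop(f_\alpha(X)) \sim_X \dop(\phi_{min_{\leq}(X, f_\alpha(X))})$) together with the independence and transitivity principles that \textbf{(Canc)} is shown to imply.

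It remains to produce $\pi$ and $u$. The resolved acts are built from the atomic actions $\{\dop(\phi_X)\}$, which under $c$ are constant maps and hence function as BEH primitive choices, their outcomes corresponding to the atoms. Restricting $\succeq$ to the resolved acts therefore yields a genuine instance of the BEH framework; completeness is inherited, and \textbf{(Canc)} continues to hold for these acts since it holds on all of $\A_F$. Applying the BEH representation theorem produces a probability $\pi$ on $\Omega$ and a utility on the outcome space; since the effective outcomes of resolved acts are precisely the atoms (each realized by the constant act $\dop(\phi_X)$), pulling this utility back along the atom-to-outcome correspondence yields $u \: : \: \Omega \to \mathbb{R}$ representing $\succeq$ on resolved acts via expected utility. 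Chaining everything, for all $\alpha, \beta \in \A_F$ we get $\alpha \succeq \beta$ iff $\hat\alpha \succeq \hat\beta$ (by $\alpha \sim \hat\alpha$, $\beta \sim \hat\beta$, and transitivity) iff
$$\sum_{\omega \in \Omega} \pi(\omega) \cdot u(\val{\hat\alpha}_{M,c}(\omega)) \geq \sum_{\omega \in \Omega} \pi(\omega) \cdot u(\val{\hat\beta}_{M,c}(\omega)),$$
which is exactly \eqref{eqn:rep} because $\val{\hat\alpha}_{M,c} = \val{\alpha}_{M,c}$ and $\val{\hat\beta}_{M,c} = \val{\beta}_{M,c}$.

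I expect the main obstacle to be the reconciliation in the final movement: ensuring that BEH's construction can be taken with state space equal to the atoms and outcome space identifiable with the atoms, so that its utility lands on $\Omega$ as the definition of \emph{language-based SEU representation} demands, and verifying that the reduction to resolved acts is faithful---that is, that $\alpha \sim \hat\alpha$ holds uniformly, so that a representation of the resolved acts really represents all of $\A_F$. The selection-function construction is comparatively routine once Lemmas \ref{lem:wll} and \ref{lem:asif} are available; the delicate part is threading our language-based interpretation through the BEH machinery while using only what their theorem actually guarantees about the constructed outcome space.
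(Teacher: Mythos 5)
Your first two movements (building $c$ on the atom state space via Lemma \ref{lem:wll}, and using Lemma \ref{lem:asif} together with the consequences of \textbf{(Canc)} to replace each act by a resolved act built from the constant actions $\dop(\phi_X)$) are sound, and track how the paper itself uses those lemmas. The gap is in the final step, which you yourself flag as the main obstacle: you assume that applying the BEH theorem to the resolved acts yields a probability on the atoms together with a \emph{state-independent} utility on an outcome space identifiable with the atoms. BEH's cancellation-based theorem provides no such thing. On a finite state space, \textbf{(Canc)} yields only a \emph{state-dependent} additive representation $u^{*}: 2^{\Phi} \times F \to \mathbb{R}$ (in BEH's ``constructive'' formulation the outcome space is itself constructed, and in effect consists of state--outcome pairs, which is where the state-dependence hides); restricting attention to resolved acts changes nothing, since they carry the same preference data. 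Nothing in the four axioms forces the value of arriving at an atom to be independent of the atom you started from, and that independence is exactly what pulling the utility back to $\Omega = {}$atoms requires.

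In fact, the statement you would end up proving is false. Let $\Phi = \{p\}$, $F = \{p, \lnot p, \true\}$ (up to logical equivalence this contains all atoms and their pairwise disjunctions), and define $\succeq$ by the state-dependent utility $u^{*}(W,\phi) = 3$ if $min_{\leq}(W,\phi) = W$ and $u^{*}(W,\phi) = 0$ otherwise, where each $\leq_{W}$ is centered. This $\succeq$ satisfies \textbf{(Canc)} (it is additively represented), \textbf{(Cent)}, \textbf{(SSC)}, and \textbf{(Trans)}, and it ranks
$$\dop(\true) \; \succ \; \dop(p) \; \sim \; \dop(\lnot p) \; \succ \; (\textbf{if $p$ then $\dop(\lnot p)$ else $\dop(p)$}),$$
a natural status-quo bias: doing nothing is strictly best. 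But in any selection model whose state space consists of the two atoms, success forces $\dop(p)$ and $\dop(\lnot p)$ to be the constant maps onto the $p$-atom and the $\lnot p$-atom respectively, so $\dop(p) \sim \dop(\lnot p)$ forces their two utility values to be equal; since every act maps each state to one of these two atoms, \emph{every} act then has the same expected utility, contradicting $\dop(\true) \succ \dop(p)$. This holds for any choice of $c$, $\pi$, and $u$ over that state space. It is precisely why the paper takes $\Omega = 2^{\Phi} \times 2^{\Phi}$: the second coordinate lets the post-action state ``remember'' the pre-action state, so that the state-dependent $u^{*}$ can be re-encoded as a utility on states alone, with an arbitrary probability $\pi$ (positive on each $\Omega_{X}$) factored out. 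That doubling of the state space is the idea your proof is missing, and without it (or some equivalent device) the final step cannot be repaired.
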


\begin{proof}
We begin by following the proof in \cite[Theorem 2]{BEH06} to obtain a
state-dependent representation with state space $2^{\Phi}$ and outcome
space $F$.\footnote{``State-dependent'' here means that the utility
  function constructed will depend not only on outcomes but on states
  as well.} More precisely, we consider the set of functions $\F =
\{f_{\alpha} \: : \: \alpha \in \A_{F}\}$ defined in Section
\ref{sec:canc}, which can be viewed as Savage acts in the classical
sense \cite{Savage}. The relation $\succeq$ on $\A_{F}$ induces a
relation $\succeq^{*}$ on $\F$ defined as follows: 
$$f_{\alpha} \succeq^{*} f_{\beta} \dimp \alpha \succeq \beta.$$
As discussed, {\bf (Canc)} implies that $\alpha \sim \alpha'$ whenever $f_{\alpha} = f_{\alpha'}$, so $\succeq^{*}$ is well-defined; moreover, as BEH show, {\bf (Canc)} is strong enough to yield the desired state-dependent representation result for $\succeq^{*}$, namely, that there exists a function $u^{*}: 2^{\Phi} \times F \to \mathbb{R}$ such that, for all $f,g \in \F$,
$$f \succeq^{*} g \dimp \sum_{X \in 2^{\Phi}} u^{*}(X, f(X)) \geq \sum_{X \in 2^{\Phi}} u^{*}(X, g(X)).$$

Up to now we have mirrored the proof given by BEH exactly, which has
given us a utility function $u^{*}$ but also an outcome space that we don't
want.
%joe2: slowing down
%Our task now is to transform this result into a selection model
Moreover, the utility function is \emph{state-dependent}; it takes as
arguments both a state and an outcome. We want a utility function
that depends only on states (which for us are the same as outcomes).  
Thus, our task now is to transform this result into a selection model 
that we can use to give a language-based SEU representation of
$\succeq$ (including a utility function defined only on states).  

Set $\Omega = 2^{\Phi} \times 2^{\Phi}$; so our state space is
isomorphic to \textit{pairs} of atoms. This is a technical maneuver
that allows us to ``factor out'' probabilities from the
state-dependent utility function $u^{*}$ we already have. Loosely
speaking, given $(X,Y) \in \Omega$, the first component $X$ represents
how things are, while the second component $Y$ represents how things
\textit{were}. This intuition should become clearer as we continue. 

We define a basic model $M = (\Omega, \val{\cdot}_{M})$ by specifying the valuation on $\Omega$ as follows:
$$\val{p}_{M} = \{(X,Y) \in \Omega \: : \: \models \phi_{X} \lthen p\}.$$
In other words, $p$ is true at $(X,Y)$ just in case $\phi_{X}$ entails
$p$. Note that the valuation only depends on the first component $X$ of
the state $(X,Y)$. 

Next we specify a parametrized family of well-orders on $\Omega$ that
we can use to induce a selection function. First define
$$(X,X') \sqsubseteq_{W,W'} (Y,Y') \textrm{ iff } \phi_{X} \leq_{W} \phi_{Y}.$$
Again, we are ignoring the second component. This is clearly a
well-order when restricted to the first component of the state space,
but not in general, since by definition we have $(X,X')
\sqsubseteq_{W,W'} (Y,Y')$ and $(Y,Y') \sqsubseteq_{W,W'} (X,X')$
whenever $X=Y$. However, as usual, we can extend these relations to
well-orders $\leq_{W,W'}$ on all of $\Omega$ simply by choosing a
linear order for each set of the form $\Omega_{X} \defeq \{(X,Y) \: :
\: Y \in 2^{\Phi}\}$, and in so doing we can insist that for each
fixed $X$, the state $(X,W)$ is $\leq_{W,W'}$-minimal on the set
$\Omega_{X}$. 

This is the first time we have paid attention to the second component of the state. Roughly speaking, we are ensuring that the order $\leq_{W,W'}$ ``remembers'' the set $W$. More perspicuously, it is easy to see that if $c$ is the selection function induced by the family $\{\leq_{(W,W')} \: : \: (W,W') \in \Omega\}$, then for each $(W,W') \in \Omega$ and all $\phi \in \L$, we have
\begin{equation} \label{eqn:move}
\val{\dop(\phi)}_{M,c}(W,W') = c((W,W'),\phi) = (min_{\leq}(W,\phi), W).
\end{equation}
That is, the closest $\phi$-state to $(W,W')$ encodes both the closest atom compatible with $\phi$ to $\phi_{W}$ (in the first component) \textit{and} the state $W$ that we started from (in the second component).

Now we can define our utility function and probability measure. Let $\pi$ be any probability measure on $\Omega$ satisfying $\pi(\Omega_{X}) > 0$ for all $X$. Next, define $u: \Omega \to \mathbb{R}$ by
$$u(X,W) = \frac{u^{*}(W,\phi)}{\pi(\Omega_{W})}, \textrm{ for some $\phi$ such that $min_{\leq}(W,\phi) = X$}.$$
Of course, we need to check that $u$ is well-defined, and we do so in Lemma \ref{lem:wdf}. But first some intuition is in order. Thinking back to the state-dependent utility function $u^{*}$, a reasonable first gloss of the meaning of $u^{*}(W,\phi)$ might be ``the utility of doing $\phi$ in $W$''.\footnote{Though this isn't quite right---it's more like the product of that utility with the probability of $W$, which is why we have to factor that probability out in defining our utility function.} The point is that $u^{*}$ is specifying the utility value not of an action in itself or the ``result'' of an action, but rather the result of an action \textit{if you started in a certain state}. This is all very informal, but the idea is just to provide some intuition for why, in defining our utility function $u$ from $u^{*}$, we need to appeal to a rich enough notion of state that can ``remember'' what the ``previous'' state was---intuitively, the state we were at before the action was performed.

\begin{lemma} \label{lem:wdf}
The function $u$ is well-defined.
\end{lemma}

The last thing we need to show is that the selection model $(M,c)$ we have built, along with $\pi$ and $u$, gives us an expected utility representation of $\succeq$. So let $\alpha, \beta \in \A_{F}$ and suppose that $\alpha \succeq \beta$. By definition this is equivalent to $f_{\alpha} \succeq^{*} f_{\beta}$, which by the state-dependent representation result is in turn equivalent to
\begin{equation} \label{eqn:sdeu}
\sum_{W \in 2^{\Phi}} u^{*}(W, f_{\alpha}(W)) \geq \sum_{W \in 2^{\Phi}} u^{*}(W, f_{\beta}(W)).
\end{equation}
Now observe that, for each $W \in 2^{\Phi}$,
\begin{eqnarray*}
u^{*}(W, f_{\alpha}(W)) & = & \pi(\Omega_{W}) \cdot u(min_{\leq}(W,f_{\alpha}(W)),W) \qquad \qquad \textrm{(by definition of $u$)}\\
& = & \pi(\Omega_{W}) \cdot u(\val{\dop(f_{\alpha}(W))}_{M,c}(W,W')) \; \; \qquad \textrm{(from (\ref{eqn:move}))}\\
& = & \pi(\Omega_{W}) \cdot u(\val{\alpha}_{M,c}(W,W')) \; \qquad \qquad \qquad \textrm{(by definition of $f_{\alpha}$ and $(M,c)$).}
\end{eqnarray*}
Note that in the above $W'$ can be \textit{any} element of $2^{\Phi}$, since it's not taken into account in determining the result of an action. That means we can rewrite the above as
$$u^{*}(W, f_{\alpha}(W)) = \sum_{W' \in 2^{\Phi}} \pi(W,W') \cdot u(\val{\alpha}_{M,c}(W,W')).$$
Of course, an analogous equation holds for $u^{*}(W, f_{\beta}(W))$. Thus, (\ref{eqn:sdeu}) is equivalent to:
$$\sum_{W \in 2^{\Phi}} \sum_{W' \in 2^{\Phi}} \pi(W,W') \cdot u(\val{\alpha}_{M,c}(W,W')) \geq \sum_{W \in 2^{\Phi}} \sum_{W' \in 2^{\Phi}} \pi(W,W') \cdot u(\val{\beta}_{M,c}(W,W')),$$
which
%adam3: shortening this for space
%immediately gives
%$$\sum_{\omega \in \Omega} \pi(\omega) \cdot u(\val{\alpha}_{M,c}(\omega)) \geq \sum_{\omega \in \Omega} \pi(\omega) \cdot u(\val{\beta}_{M,c}(\omega)),$$
%completing the proof.
is exactly the right-hand side of (\ref{eqn:rep}), completing the proof.
\end{proof}

%discussion
\section{Discussion} \label{sec:fur}

We have considered a framework in which both the conditions for and
the results of an action are given by simple descriptions in a fixed
language. These descriptions may not be maximally specific, so the
results of actions can be underspecified and therefore ``open to
interpretation''. We have shown that, in this context, agents whose
preferences satisfy certain constraints can be represented as if they
are expected utility maximizers who interpret each underspecified
action using a selection function identical to that employed in
standard semantics for counterfactual conditionals. 

The representation theorem presented in this extended abstract might
be viewed as a sort of ``proof of concept'', namely, that such
representation results are possible and even natural. This opens the
door for a variety of related results connecting different assumptions
about the selection function to different constraints on the agent's
preferences.
%joe3
As we mentioned above, there are a number of standard assumptions
along these lines in the literature on counterfactuals.

The underlying language we chose to work with can also be
altered. Perhaps most obviously, we might consider allowing
countably-many primitive propositions. In this case, we cannot
straightforwardly use atoms as the basis for the state space in the
representation theorem, and in general we might need to relax the
notion of a ``complete description'' to something like a
``sufficiently detailed description''. Going in the other direction,
we might also considering dropping some of the richness
constraints we imposed. For instance, we assumed that $F$ contains all
atoms (and all pairwise disjunctions of atoms). Can this assumption be
relaxed? 

In our framework, because we use the same descriptions for both states
and outcomes, we found it convenient to identify the two. This in turn
makes it
%joe3
straightforward to extend to a richer language of acts, where we allow
\emph{sequential actions}, implemented
directly by function composition.
%joe3
That is, we can allow actions of the form $\dop(\phi); \dop(\psi)$
%adam3: added
(``first do $\phi$, then do $\psi$''), or more generally, $\alpha; \beta$.
Thus, the (underspecified!) results
of the first action are directly relevant to the conditions under
which the second action is executed, which may allow for entirely new
and intriguing ways of encoding modeling features via constraints on
preferences. 

Finally, generalizing this framework to multiple agents is of
interest. Indeed, the original motivation for this work is doubly
relevant in multi-agent settings: two different decision-makers might
conceive of the same action in different ways, by associating it with
different functions. For example, we should be able to model two
agents who agree about their values and have the same beliefs about
the likelihoods of uncertain events, but still have different
preferences over actions---intuitively, because they interpret the
``default'' way of implementing actions differently (in other words,
they have the same utility function and probability measure, but
different selection functions). 

In short, this area is ripe for further exploration, with many
%adam3
%interesting applications, both theoretical and practical.
theoretical and practical applications.

\appendix

%proofs
\section{Proofs} \label{app:prf}

\begin{customlem}{\ref{lem:cent}}
If $(M,c)$ is a selection model, $c$ satisfies
centering, and $\models \psi \lthen \phi$, then
$$\val{\textbf{if $\psi$ then $\dop(\phi)$}}_{M,c} = id_{\Omega} = \val{\dop(\true)}_{M,c}.$$
\end{customlem}

\begin{proof}
By definition, we have
\begin{eqnarray*}
\val{\textbf{if $\psi$ then $\dop(\phi)$}}_{M,c}(\omega) & = & \begin{cases}
\val{\dop(\phi)}_{M,c}(\omega) & \textrm{if $\omega \in \val{\psi}$}\\
\val{\dop(\true)}_{M,c}(\omega) & \textrm{if $\omega \notin \val{\psi}$.}
\end{cases}\\
& = & \begin{cases}
c(\omega, \phi) & \textrm{if $\omega \in \val{\psi}$}\\
c(\omega, \true) & \textrm{if $\omega \notin \val{\psi}$.}
\end{cases}
\end{eqnarray*}
But since $\val{\psi} \subseteq \val{\phi}$ by assumption, in either case, centering applies and guarantees that
$$\val{\textbf{if $\psi$ then $\dop(\phi)$}}_{M,c}(\omega) = \omega. \qedhere$$
\end{proof}

\begin{customlem}{\ref{lem:lang}}
If $(M,c)$ is a selection model where $c$ is induced by the
well-orders $\leq \; = \{\leq_{\omega} \: : \: \omega \in \Omega\}$,
$\leq$ is language-based, $\models \phi \liff (\phi_{1}
\lor \cdots \lor \phi_{n})$, and $X \subseteq \Phi$, then $\exists i \in
\{1, \ldots, n\}$ such that for all $\psi$ satisfying $\models
\phi_{i} \lthen \psi$ and $\models \psi \lthen \phi$ and all $\omega
\in \val{\phi_{X}}$, we have 
%adam3: moved inline to save space
%$$\val{\dop(\psi)}_{M,c}(\omega) = \val{\dop(\phi_{i})}_{M,c}(\omega).$$
$\val{\dop(\psi)}_{M,c}(\omega) = \val{\dop(\phi_{i})}_{M,c}(\omega)$.
\end{customlem}

\begin{proof}
Let $\omega \in \val{\phi_{X}}$ and choose $i$ such that
$c(\omega,\phi) \in \val{\phi_{i}}$. This is possible since we know
$c(\omega, \phi) \in \val{\phi}$ and, by assumption,
$\val{\phi} = \val{\phi_{1}} \cup \ldots \cup \val{\phi_{n}}$.
Since $c(\omega, \phi)$ is the $\leq_{\omega}$-minimal
element of $\val{\phi}$, it follows that for any set $T$ with
$c(\omega, \phi) \in T \subseteq \val{\phi}$, $c(\omega, \phi)$ is
also the $\leq_{\omega}$-minimal element of $T$. In particular, since
$c(\omega, \phi) \in \val{\phi_{i}} \subseteq \val{\psi} \subseteq
\val{\phi}$, this implies that $c(\omega, \phi)$ is the
$\leq_{\omega}$-minimal element of both $\val{\phi_{i}}$ and
$\val{\psi}$. Thus, by definition, $c(\omega, \phi_{i}) = c(\omega,
\psi)$, so 
$$\val{\dop(\psi)}_{M,c}(\omega) = c(\omega, \psi) = c(\omega, \phi_{i}) = \val{\dop(\phi_{i})}_{M,c}(\omega).$$
Since $\omega \models \phi_{X}$ and this completely determines the theory of $\omega$, we know that for any other $\omega' \in \val{\phi_{X}}$, $\omega' \equiv \omega$, so $c(\omega', \phi) \equiv c(\omega, \phi)$. This guarantees that $c(\omega', \phi) \in \val{\phi_{i}}$; in other words, the same choice of $i$ works for all states in $\val{\phi_{X}}$, which completes the proof.
\end{proof}

\begin{customlem}{\ref{lem:com}}
{\bf (SSC)} implies that each $\sqsubseteq_{W}$ is complete.
\end{customlem}

\begin{proof}
Fix any two atoms $\phi_{X}$ and $\phi_{Y}$. We apply {\bf (SSC)} in the case where $\phi = \phi_{X} \lor \phi_{Y}$, $\phi_{1} = \phi_{X}$, $\phi_{2} = \phi_{Y}$, and $\psi = \phi$. Then we know that given any $W \subseteq \Phi$, either $\dop(\phi) \sim_{W} \dop(\phi_{1})$ or $\dop(\phi) \sim_{W} \dop(\phi_{2})$, that is, either $\dop(\phi_{X} \lor \phi_{Y}) \sim_{W} \dop(\phi_{X})$ or $\dop(\phi_{X} \lor \phi_{Y}) \sim_{W} \dop(\phi_{Y})$, which established completeness.
\end{proof}

\begin{customlem}{\ref{lem:wll}}
If {\bf (SSC)} and {\bf (Trans)} hold, then each $\sqsubseteq_{W}$ is
a total preorder and can be extended to a well-order $\leq_{W}$ on the
set of atoms; if, in addition, {\bf (Cent)} holds, then each $\leq_{W}$
can be defined so that $\phi_{W}$ is the $\leq_{W}$-minimal element. 
\end{customlem}

\begin{proof}
The fact that $\sqsubseteq_{W}$ is a total preorder follows
immediately from {\bf (Trans)} and Lemma \ref{lem:com}.
Moreover, it is easy
to see that any total preorder on a finite set can be extended to a
well-order (by choosing an arbitrary linear order for each subset of
$\sqsubseteq_{W}$-equivalent atoms). To see that this can be done in
such a way that $\phi_{W}$ is the $\leq_{W}$-minimal element, it
suffices to show that for every $X \subseteq \Phi$, we have $\phi_{W}
\sqsubseteq_{W} \phi_{X}$, or in other words, $\dop(\phi_{W} \lor
\phi_{X}) \sim_{W} \dop(\phi_{W})$. The result now follows from two
applications of {\bf (Cent)}. First we apply it in the case where
$\psi = \phi_{W}$ and $\phi = \phi_{W} \lor \phi_{X}$ to obtain
$(\textbf{if $\phi_{W}$ then $\dop(\phi_{W} \lor \phi_{X})$}) \sim
\dop(\true)$; then we apply it in the case where $\psi = \phi =
\phi_{W}$ to obtain $(\textbf{if $\phi_{W}$ then $\dop(\phi_{W})$})
\sim \dop(\true)$. Transitivity of $\sim$ therefore yields
$$(\textbf{if $\phi_{W}$ then $\dop(\phi_{W} \lor \phi_{X})$}) \sim (\textbf{if $\phi_{W}$ then $\dop(\phi_{W})$}),$$
which by definition is equivalent to $\dop(\phi_{W} \lor \phi_{X}) \sim_{W} \dop(\phi_{W})$.
\end{proof}

\begin{customlem}{\ref{lem:asif}}
If {\bf (SSC)} and {\bf (Trans)} hold, then $\dop(\phi) \sim_{W} \dop(\phi_{min_{\leq}(W,\phi)})$.
\end{customlem}

\begin{proof}
Let $X = min_{\leq}(W,\phi)$, and let $\phi_{X_{1}}, \ldots, \phi_{X_{n}}$ enumerate all the atoms compatible with $\phi$. Then by definition we know that $X = X_{j}$ for some $j$. We also clearly have $\models \phi \liff (\phi_{X_{1}} \lor \cdots \lor \phi_{X_{n}})$, so we can apply {\bf (SSC)} (taking $\psi = \phi$) to find an $i$ such that $\dop(\phi) \sim_{W} \dop(\phi_{X_{i}})$.

By definition of $X$, we know that $\phi_{X} \leq_{W} \phi_{X_{i}}$, which means $\dop(\phi_{X} \lor \phi_{X_{i}}) \sim_{W} \dop(\phi_{X})$. On the other hand, since $\models \phi_{X_{i}} \lthen (\phi_{X} \lor \phi_{X_{i}})$ and $\models (\phi_{X} \lor \phi_{X_{i}}) \lthen \phi$, {\bf (SSC)} also tells us (taking $\psi = \phi_{X} \lor \phi_{X_{i}}$ this time) that $\dop(\phi_{X} \lor \phi_{X_{i}}) \sim_{W} \dop(\phi_{X_{i}})$. By transitivity of $\sim_{W}$ we therefore have $\dop(\phi_{X_{i}}) \sim_{W} \dop(\phi_{X})$, and therefore $\dop(\phi) \sim_{W} \dop(\phi_{X})$, as desired.
\end{proof}

\begin{customlem}{\ref{lem:wdf}}
The function $u$ is well-defined.
\end{customlem}

\begin{proof}
What we need to show that is that if $min_{\leq}(W,\phi) = X$ and also $min_{\leq}(W,\phi') = X$, then $u^{*}(W,\phi) = u^{*}(W,\phi')$. By Lemma \ref{lem:asif}, we know that $\dop(\phi) \sim_{W} \dop(\phi_{X})$, and also that $\dop(\phi') \sim_{W} \dop(\phi_{X})$. Focusing on the first of these two indifferences to begin with, by definition we have
$$\textbf{if $\phi_{W}$ then $\dop(\phi)$} \sim \textbf{if $\phi_{W}$ then $\dop(\phi_{X})$}.$$
Setting $\alpha = \textbf{if $\phi_{W}$ then $\dop(\phi)$}$ and $\beta = \textbf{if $\phi_{W}$ then $\dop(\phi_{X})$}$, it follows that $f_{\alpha} \sim^{*} f_{\beta}$ (by definition of $\succeq^{*}$). Thus, from the state-dependent representation result, we can deduce that
$$\sum_{Z \in 2^{\Phi}} u^{*}(Z, f_{\alpha}(Z)) = \sum_{Z \in 2^{\Phi}} u^{*}(Z, f_{\beta}(Z)).$$
But it's easy to see that whenever $Z \neq W$, $f_{\alpha}(Z) =
f_{\beta}(Z)$, so we can cancel all those terms in the equality above
to arrive at $u^{*}(W, f_{\alpha}(W)) = u^{*}(W, f_{\beta}(W))$. This
yields $u^{*}(W, \phi) = u^{*}(W, \phi_{X})$, since clearly
$f_{\alpha}(W) = \phi$ and $f_{\beta}(W) = \phi_{X}$. Analogous
reasoning starting from the fact that $\dop(\phi') \sim_{W}
\dop(\phi_{X})$ leads us to $u^{*}(W, \phi') = u^{*}(W,
\phi_{X})$. Putting these together gives $u^{*}(W, \phi) = u^{*}(W,
\phi')$, as desired. 
\end{proof}

\bibliographystyle{eptcs}
\bibliography{z,joe}

\begin{thebibliography}{10}
\providecommand{\bibitemdeclare}[2]{}
\providecommand{\surnamestart}{}
\providecommand{\surnameend}{}
\providecommand{\urlprefix}{Available at }
\providecommand{\url}[1]{\texttt{#1}}
\providecommand{\href}[2]{\texttt{#2}}
\providecommand{\urlalt}[2]{\href{#1}{#2}}
\providecommand{\doi}[1]{doi:\urlalt{http://dx.doi.org/#1}{#1}}
\providecommand{\bibinfo}[2]{#2}

\bibitemdeclare{article}{Ahn07}
\bibitem{Ahn07}
\bibinfo{author}{D.~\surnamestart Ahn\surnameend} (\bibinfo{year}{2008}):
  \emph{\bibinfo{title}{Ambiguity without a state space}}.
\newblock {\sl \bibinfo{journal}{Review of Economic Studies}}
  \bibinfo{volume}{71}(\bibinfo{number}{1}), pp. \bibinfo{pages}{3--28},
  \doi{10.1111/j.1467-937X.2007.00473.x}.

\bibitemdeclare{article}{AE07}
\bibitem{AE07}
\bibinfo{author}{D.~\surnamestart Ahn\surnameend} \&
  \bibinfo{author}{H.~\surnamestart Ergin\surnameend} (\bibinfo{year}{2010}):
  \emph{\bibinfo{title}{Framing contingencies}}.
\newblock {\sl \bibinfo{journal}{Econometrica}}
  \bibinfo{volume}{78}(\bibinfo{number}{2}), pp. \bibinfo{pages}{655--695},
  \doi{10.3982/ECTA7019}.

\bibitemdeclare{inproceedings}{BEH06}
\bibitem{BEH06}
\bibinfo{author}{L.~E. \surnamestart Blume\surnameend},
  \bibinfo{author}{D.~\surnamestart Easley\surnameend} \&
  \bibinfo{author}{J.~Y. \surnamestart Halpern\surnameend}
  (\bibinfo{year}{2006}): \emph{\bibinfo{title}{Redoing the Foundations of
  Decision Theory}}.
\newblock In: {\sl \bibinfo{booktitle}{Principles of Knowledge Representation
  and Reasoning: Proc.~Tenth International Conference (KR '06)}}, pp.
  \bibinfo{pages}{14--24}.
\newblock \bibinfo{note}{A longer version, entitled ``Constructive decision
  theory'', can be found at
  http://www.cs.cornell.edu/home/halpern/papers/behfinal.pdf.}

\bibitemdeclare{article}{DLR01}
\bibitem{DLR01}
\bibinfo{author}{E.~\surnamestart Dekel\surnameend},
  \bibinfo{author}{B.~\surnamestart Lipman\surnameend} \&
  \bibinfo{author}{A.~\surnamestart Rustichini\surnameend}
  (\bibinfo{year}{2001}): \emph{\bibinfo{title}{Representing preferences with a
  unique subjective state space}}.
\newblock {\sl \bibinfo{journal}{Econometrica}} \bibinfo{volume}{69}, pp.
  \bibinfo{pages}{891--934}, \doi{10.1111/1468-0262.00224}.

\bibitemdeclare{article}{Ghir01}
\bibitem{Ghir01}
\bibinfo{author}{P.~\surnamestart Ghirardato\surnameend}
  (\bibinfo{year}{2001}): \emph{\bibinfo{title}{Coping with ignorance:
  unforeseen contingencies and non-additive uncertainty}}.
\newblock {\sl \bibinfo{journal}{Economic Theory}} \bibinfo{volume}{17}, pp.
  \bibinfo{pages}{247--276}, \doi{10.1007/PL00004108}.

\bibitemdeclare{article}{GS04}
\bibitem{GS04}
\bibinfo{author}{I.~\surnamestart Gilboa\surnameend} \&
  \bibinfo{author}{D.~\surnamestart Schmeidler\surnameend}
  (\bibinfo{year}{2004}): \emph{\bibinfo{title}{Subjective distributions}}.
\newblock {\sl \bibinfo{journal}{Theory and Decision}} \bibinfo{volume}{56},
  pp. \bibinfo{pages}{345--357}, \doi{10.1007/s11238-004-2596-7}.

\bibitemdeclare{article}{Karni06}
\bibitem{Karni06}
\bibinfo{author}{E.~\surnamestart Karni\surnameend} (\bibinfo{year}{2006}):
  \emph{\bibinfo{title}{Subjective expected utility theory without states of
  the world}}.
\newblock {\sl \bibinfo{journal}{Journal of Mathematical Economics}}
  \bibinfo{volume}{42}, pp. \bibinfo{pages}{325--342},
  \doi{10.1016/j.jmateco.2005.08.007}.

\bibitemdeclare{incollection}{Kreps92}
\bibitem{Kreps92}
\bibinfo{author}{D.~\surnamestart Kreps\surnameend} (\bibinfo{year}{1992}):
  \emph{\bibinfo{title}{Static choice and unforeseen contingencies}}.
\newblock In \bibinfo{editor}{P.~\surnamestart Dasgupta\surnameend},
  \bibinfo{editor}{D.~\surnamestart Gale\surnameend} \&
  \bibinfo{editor}{O.~\surnamestart Hart\surnameend}, editors: {\sl
  \bibinfo{booktitle}{Economic Analysis of Markets and Games: Essays in Honor
  of Frank Hahn}}, \bibinfo{publisher}{MIT Press}, \bibinfo{address}{Cambridge,
  MA}.

\bibitemdeclare{article}{Lip99}
\bibitem{Lip99}
\bibinfo{author}{B.~L. \surnamestart Lipman\surnameend} (\bibinfo{year}{1999}):
  \emph{\bibinfo{title}{Decision theory without logical omniscience: {T}oward
  an axiomatic framework for bounded rationality}}.
\newblock {\sl \bibinfo{journal}{Review of Economic Studies}}
  \bibinfo{volume}{66}, pp. \bibinfo{pages}{339--361},
  \doi{10.1111/1467-937X.00090}.

\bibitemdeclare{incollection}{Machina03}
\bibitem{Machina03}
\bibinfo{author}{M.~\surnamestart Machina\surnameend} (\bibinfo{year}{2006}):
  \emph{\bibinfo{title}{States of the World and State of Decision Theory}}.
\newblock In \bibinfo{editor}{D.~\surnamestart Meyer\surnameend}, editor: {\sl
  \bibinfo{booktitle}{The Economics of Risk}}, \bibinfo{publisher}{W. E. Upjohn
  Institute}.

\bibitemdeclare{article}{Pearl.Biometrika}
\bibitem{Pearl.Biometrika}
\bibinfo{author}{J.~\surnamestart Pearl\surnameend} (\bibinfo{year}{1995}):
  \emph{\bibinfo{title}{Causal diagrams for empirical research}}.
\newblock {\sl \bibinfo{journal}{Biometrika}}
  \bibinfo{volume}{82}(\bibinfo{number}{4}), pp. \bibinfo{pages}{669--710},
  \doi{10.1093/biomet/82.4.669}.

\bibitemdeclare{book}{Savage}
\bibitem{Savage}
\bibinfo{author}{L.~J. \surnamestart Savage\surnameend} (\bibinfo{year}{1954}):
  \emph{\bibinfo{title}{Foundations of Statistics}}.
\newblock \bibinfo{publisher}{Wiley}, \bibinfo{address}{New York}.

\bibitemdeclare{incollection}{Stalnaker68}
\bibitem{Stalnaker68}
\bibinfo{author}{R.~C. \surnamestart Stalnaker\surnameend}
  (\bibinfo{year}{1968}): \emph{\bibinfo{title}{A theory of conditionals}}.
\newblock In \bibinfo{editor}{N.~\surnamestart Rescher\surnameend}, editor:
  {\sl \bibinfo{booktitle}{Studies in Logical Theory}},
  \bibinfo{publisher}{Blackwell}, \bibinfo{address}{Oxford, U.K.}, pp.
  \bibinfo{pages}{98--112}.

\bibitemdeclare{article}{tverskykoehler}
\bibitem{tverskykoehler}
\bibinfo{author}{A.~\surnamestart Tversky\surnameend} \& \bibinfo{author}{D.~J.
  \surnamestart Koehler\surnameend} (\bibinfo{year}{1994}):
  \emph{\bibinfo{title}{Support theory: A nonextensional representation of
  subjective probability}}.
\newblock {\sl \bibinfo{journal}{Psychological Review}}
  \bibinfo{volume}{101}(\bibinfo{number}{4}), pp. \bibinfo{pages}{547--567},
  \doi{10.1037/0033-295X.101.4.547}.

\end{thebibliography}
\end{document}